\newcolumntype{M}{>{\centering\arraybackslash}X}
\algnewcommand\algorithmicforeach{\textbf{for each}}
\newtheorem{assumption}{Assumption}[theorem]
\begin{document}

\title{Receding Hamiltonian-Informed Optimal Neural Control and State Estimation for Closed-Loop Dynamical Systems}

\author{\name Josue N. Rivera, Dengfeng Sun \email \{river264, dsun\}@purdue.edu \\
       \addr School of Aeronautics and Astronautics\\
       Purdue University\\
       West Lafayette, IN 47907, USA}

\editor{My editor}

\maketitle

\begin{abstract}
This paper formalizes Hamiltonian-Informed Optimal Neural (Hion) controllers, a novel class of neural network-based controllers for dynamical systems and explicit non-linear model-predictive control. Hion controllers estimate future states and develop an optimal control strategy using Pontryagin's Maximum Principle. The proposed framework, along with our Taylored Multi-Faceted Approach for Neural ODE and Optimal Control (T-mano) architecture, allows for custom transient behavior, predictive control, and closed-loop feedback, addressing limitations of existing methods. Comparative analyses with established model-predictive controllers revealed Hion controllers' superior optimality and tracking capabilities. Optimal control strategies are also demonstrated for both linear and non-linear dynamical systems.
\end{abstract}

\begin{keywords}
  model-predictive control, neural control, optimal control, dynamical systems
\end{keywords}

\section{Introduction}

Optimal control problems often involve designing controllers for systems with complex, chaotic, and/or non-linear dynamics. These problems are crucial for unmanned aerial vehicles (UAVs) flight controllers, robotics, and nuclear power plants \citep{salzmann2023real, katayama2023model, naimi2022nonlinear}. Various methods have been developed to address these problems. Solutions include dynamic programming, bang-bang controllers, proportional-integral-derivative (PID) controllers, linear-quadratic regulators (LQR), reinforcement learning (RL), and many variants of model-predictive control (MPC). However, these methods often encounter challenges in delivering solutions that are both optimally effective and practical. Some methods react to deviations without considering the optimality of the control, while others can be expensive to operate in practice \citep{schwenzer2021review, bemporad2002model}. Neural network approaches also grapple with their own unique challenges to generate solutions that consider accurate system dynamics. The quality of the control is often contingent on the quality of the training data \citep{zheng2023physics}. Among the developed methods, MPCs are intriguing as they consider the effects of current control actions on future states. Nonetheless, many methods fail to address optimality conditions or computational efficiency when real-time optimization of the control is required \citep{bemporad2002model}.

To address these challenges, this paper introduces a new class of neural network-based controllers for dynamical systems: Hamiltonian-Informed Optimal Neural (Hion) controllers, along with a novel architecture, the Taylored Multi-Faceted Approach for Neural ODE and Optimal Control (T-mano). Hion controllers are a type of explicit non-linear MPC neural network-based models that map an observed and desired state to a continuous control strategy and expected future states. The objective is to optimize the parameters of the controller to provide state estimation and control that not only adheres to a system dynamics but also follows a given transient response profile. The model is intended to operate in a closed-loop system, where it can cope with delays in receiving state information by predicting the system's expected future behavior. Hion controllers offer a new alternative to RL methods and other MPC-based approaches for controlling dynamical systems.

\subsection{Background}
\index{Model-predictive control}

Model-predictive control (MPC) encompasses a set of algorithms that leverage future state estimation to formulate control strategies for dynamical systems \citep{schwenzer2021review}. These algorithms typically involve iteratively solving control optimization problems over a receding horizon within a closed-loop framework. Classical MPC, conventionally, consists of discretizing a system dynamics and forming an optimization problem that must be solved at each iteration \citep{kouvaritakis2016model}. The optimization requires determining a zero-order hold control solution at the current iteration and the next few ones (known as the control horizon) while estimating the new states to be observed. Zero-order hold control strategies may be solved through a multitude of techniques including quadratic and dynamic programming. However, several issues arise. First, optimization problems may be computationally expensive, specially when involving non-linear future state estimation and/or long horizon. Secondly, the sampling size must be significant for the optimization to conclude which may not provide the quick reaction needed for a volatile systems. Additionally, the derivation of zero-order hold control strategies may not provide the most optimal solution to the problem. Successive linearization based MPCs (SLMPCs) are a set of approaches that linearize a plant's dynamics at each iteration, facilitating quick state estimation for control optimization \citep{zhakatayev2017successive, kuhne2004model}. While the methods reduce the computational burden, they often compromises the accuracy of the estimation. Non-linear MPCs (NMPCs) are a later attempt to account for the inherent non-linearity of the system dynamics by employing surrogate models to reduce the computational cost of the state estimation. Explicit MPCs, which overlap with both linear and non-linear approaches, precompute control optimizations and reuse the solutions during real-time operation \citep{schwenzer2021review, bemporad2002model}. The controller proposed in this research falls within the category of explicit NMPC, combining the benefits of non-linear state estimation with the computational efficiency brought by using precomputed solutions.

Artificial neural networks with MPC (ANN-MPCs) represent a subset of MPC models that involve utilizing neural networks as prediction models in the control loop \citep{wang2021model, pang2023virtual, hewing2020learning, cavagnari1999neural}. Typically, the models are trained to predict the expected future state of the system after a specified time interval, relying on trajectory data collected from simulations or test environments. The primary advantage lies in the models' ability to significantly reduce the computational cost associated with classical prediction models by leveraging the approximations provided by neural networks. As a result, less time is needed before the next state of the system can be sampled and an action can be taken. Recurrent neural networks (RNNs) are employed in a subset of these algorithms \citep{jordanou2021echo, ren2022tutorial}. They involve feeding previously sampled states back into the model during inference and passing knowledge onto subsequent iterations to enhance predictive capabilities. Although a significant number of strategies focus on replacing the predictive component, a subset of ANN-MPCs attempt to replace the controller entirely by training them to be surrogate using collected trajectories and the corresponding control observed from a larger computationally expensive MPC model \citep{rivera2024fast, hertneck2018learning, aakesson2006neural}. However, the use of neural networks as surrogate models can impact the optimality, accuracy of the dynamics, and out-of-distribution performance.

Physics-informed neural networks with MPC (PINN-MPCs) aim to further enhance the capabilities of the predictive models in the control loop. At their core, they integrate information about the dynamics, resulting in refined future state estimation and the information passed to control optimization \citep{antonelo2024physics, faria2024data, arnold2021state, zheng2023physics, nicodemus2022physics}. Physics-Informed Neural Nets for Control (PINC), introduced by \cite{antonelo2024physics}, was one of the first methods proposed using PINNs as the prediction model in an MPC framework. The approach involves training a PINN to predict a continuous estimation of future states over a fixed horizon given a constant control signal. The method provides more reliable predictions that can guide control optimization. Additionally, due to the continuous state prediction, distinct control optimization strategies that rely on different sampling rates can be implemented or tested with a single PINN model. However, there are some limitations attached to these approaches. Although a continuous set of future states of the system is estimated, PINN-MPCs often rely on a single predicted state to guide the control, due to the MPC model's reliance on zero-order hold control strategies.


Few works exist that consider neural networks as physics-informed controller in dynamical systems. Fewer works exist that allow for the adjustment of the system's transient characteristic. Ours and newer approaches reinvents the idea of a controller when a neural network model is involved. \cite{schiassi2022bellman} illustrate the feasibility of training neural controllers using Bellman optimality principle, and how they can be extended using X-TFC for different initial and final conditions \citep{schiassi2021physics}. \citet{d2021pontryagin, barry2022physics, chi2024nodec, kamtue2024pontryagin} are recent attempts to establish neural network models that encourages PMP optimality. The works demonstrated that PMP can be used to train a neural network model to predict an optimal trajectory with desired transient properties. However, several limitations exist. One such limitation is that each model was only demonstrated to solve TVBNP for a single predefined initial and final state. This makes them impractical for closed-loop systems control where they will vary. Any new boundary condition would require fine-tuning the model (e.g., using X-TFC and/or retraining it). Building on top of these projects, our model provides explicit NMPC neural network-based control and state estimation that generalizes for variable inputs in a closed-loop. We also theoretically define a set of dynamical systems for which the proposed model is most effective. Hion controller removes the classical control optimization step conducted by classical MPC models, ANN-MPC, and PINN-MPC, and uses a single neural network as both the prediction and control model.

Figure \ref{fig:closed-controller} illustrates a comparison between the conceptual behavior of different model-predictive controllers.

\begin{figure}[ht]
  \centering
      \subcaptionbox
      {Classical MPC.}
      {\bfseries \includegraphics[width=0.24\textwidth]{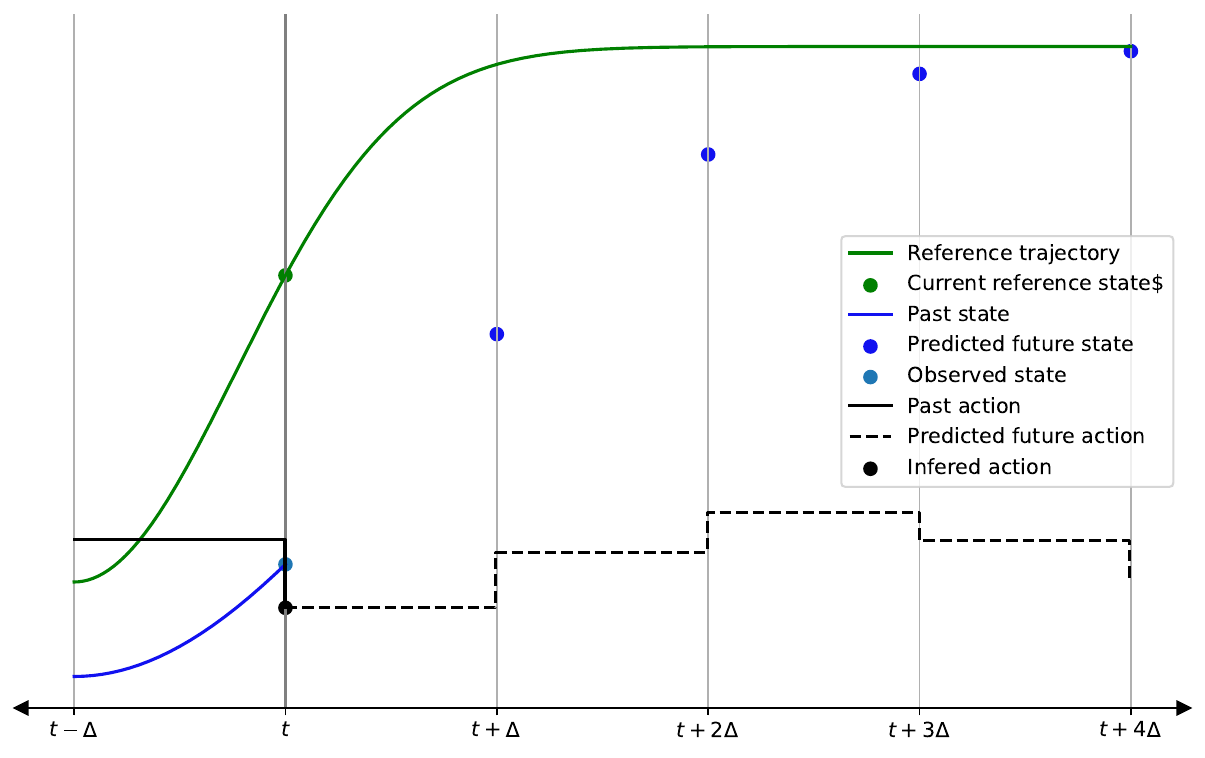}}%
    \subcaptionbox
      {ANN-MPC.}
      {\bfseries \includegraphics[width=0.24\textwidth]{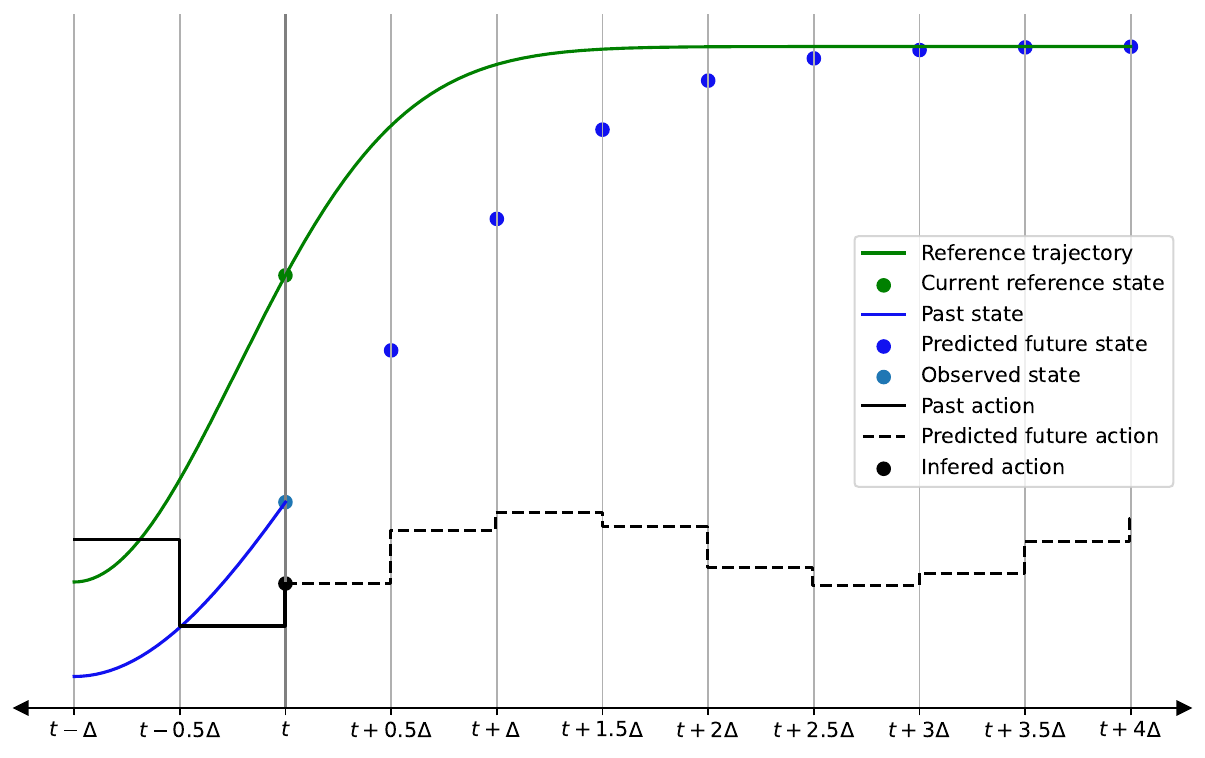}}
      \subcaptionbox
      {PINN-MPC.}
      {\bfseries \includegraphics[width=0.24\textwidth]{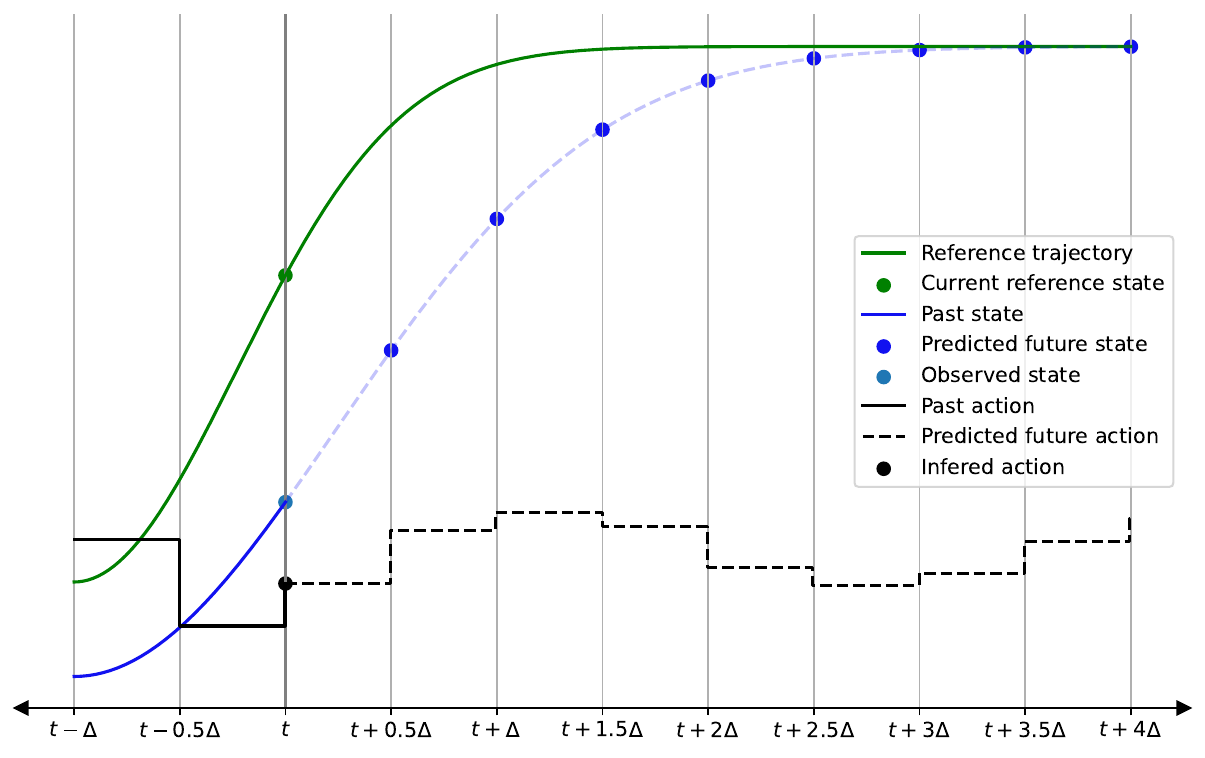}}%
    \subcaptionbox
      {Hion (Our).}
      {\bfseries \includegraphics[width=0.24\textwidth]{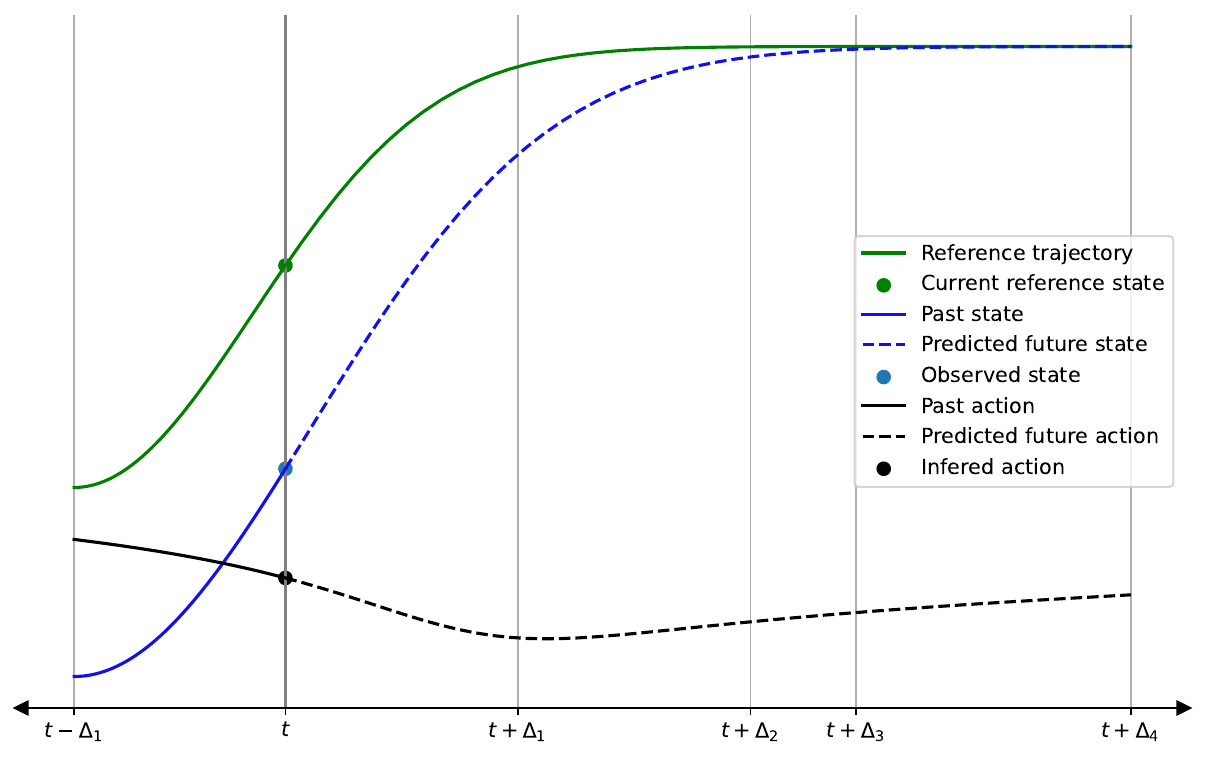}}%
      
    \caption{Conceptual behaviors of closed-loop model-predictive controllers.}
    \label{fig:closed-controller}
\end{figure}

\subsection{Contributions}

Our research formalizes a class of neural network controllers, termed Hion controllers, designed to generate control strategies within closed-loop dynamical systems. These controllers function as model-predictive controllers, enabling predictive decision-making. Our contributions include:

\begin{enumerate}
  
  \item Establishing a theoretical framework for Hion controllers.
  \item Proposing a novel neural network architecture specifically tailored for state estimation and control of dynamical systems.
  \item Developing algorithms for training Hion controllers that are aligned with Pontryagin's Maximum/Minimum Principle and encourage optimal control.
\end{enumerate}

\subsection{Outline}

Following the introduction, the first section formally defines the problem of interest we aim to address. We then present our proposed methodology and the underlying theoretical principles that support it. The experimental results section demonstrates the effectiveness and capabilities of our architecture through its application to various dynamical systems. Finally, we conclude by highlighting the advantages and limitations of the method along with potential future directions of our work.
\section{Problem Statement}

Consider a general dynamical system (or environment) with ordinary differential equations (ODEs) and a corresponding state-space representation describing it,

\begin{equation}
    \label{eq:system-dynamics}
\begin{array}{c}
    \mathcal{F}(t, \bar{x}(t), u(t)) = 0\\
    \dot{x}(t) = f(t, x(t), u(t))
\end{array}
\end{equation}
where $t$ represents time, $x(t)$ the state vector of the environment, $\bar{x}(t)$ the state vector in addition to some higher-order derivatives w.r.t. time needed to describe the ODEs, $u(t)$ the control action vector, $\mathcal{F}$ the ODEs describing the dynamics, and $f$ the dynamics function. Note $\dot{x}(t) = \frac{d}{d t} x(t)$.

\begin{definition} [Hion Controller]
Let \underline{H}amiltonian-\underline{I}nformed \underline{O}ptimal \underline{N}eural (Hion) controllers be a class of neural networks models $h: \mathbb{R}\times\mathbb{R}^n\times\mathbb{R}^n \rightarrow \mathbb{R}^k\times\mathbb{R}^m\times\mathbb{R}^n$ that maps the elapsed time $\hat{t}$ since a state was last observed, the last observed state $x_o \coloneqq x(t-\hat{t})$, and a reference state $x_r(t)$ to an inferred state $x_h(t)$, control $u_h(t)$, and co-state (i.e., Lagrange multiplier) $\lambda_h(t)$. As a neural network, $h$ contains a set of learnable parameters $\mu$.
\end{definition}

Given a transient cost to influence the behavior of the system and defined by the Lagrangian function $L$
\begin{equation}
    \label{eq:control-cost}
    J = \int_{0}^{\infty} L(t, x(t), x_r(t), u(t)) \,dt
\end{equation},
the \textbf{problem of interest} is to optimize a Hion controller's parameters $\mu$ to provide an optimal control strategy $u_h$ that reduces the least-square-error (LSE) between expected future states of the system $x_h$ and the current reference state $x_r$ while adhering to the dynamics of the system \eqref{eq:system-dynamics} and, secondarily, minimizing the transient cost \eqref{eq:control-cost}. The LSE condition w.r.t. the reference state may be omitted if a reference is not required by the problem at hand.
\section{Methodology}

Our proposed optimization for a Hion controller consists of three majors components: a distribution for observed and reference states, the Hion controller, and a set of criterion based on Pontryagin's Minimum/Maximum Principle (PMP). The state distribution defines random variable vectors from which inputs to the model can be sampled that aid in the converge of the controller. The controller defines the structure of the model and provides inferred states, controls, and their corresponding co-states. Lastly, the criterion evaluates the controller and the parameters $\mu$ are updated, accordingly. An overview of the methodology is presented in Figure \ref{gr:hion-flowchart}.

\begin{figure}[H]
  \centering
  \includegraphics[width=\textwidth]{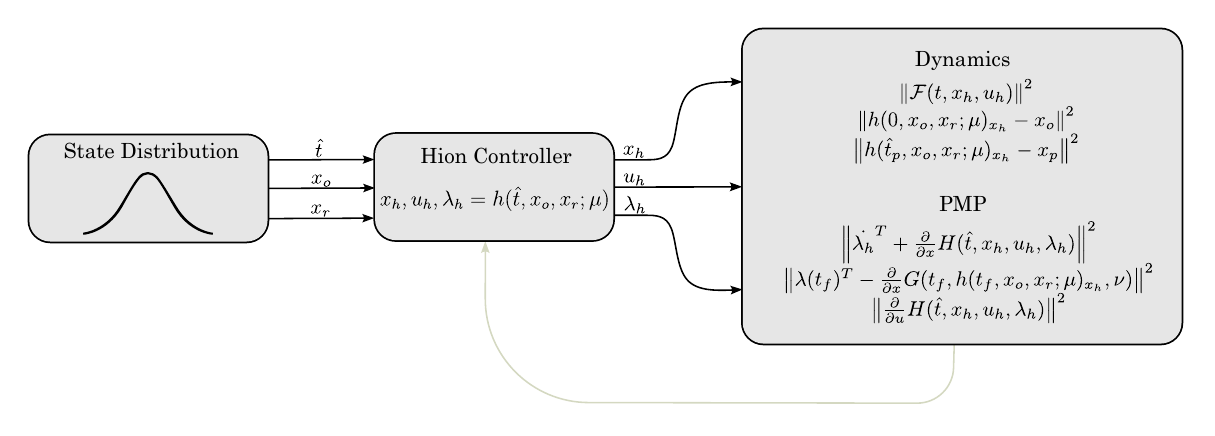}
  \caption{Hion Controller Training Flowchart}
  \label{gr:hion-flowchart}
\end{figure}

\subsection{State Distribution}

Following the convection set forth by PINNs \citep{hao2022physics}, the inputs to the Hion model are sampled from random variables. The first input, the elapsed time $\hat{t}$, is sampled from an uniform distribution
\begin{equation}
    \hat{T} \sim \mathcal{U}(0, t_f)
\end{equation}
where $t_f$ is a terminal time. In a two point boundary value problem\index{Two point boundary value problem (TPBVP)} (TPBVP), $t_f$ indicates the final time of the problem. In a closed-loop system, it represents the size of the time window for the system to be driven to a reference state. For the other inputs to the model, we must first introduce some definitions and theorems.

\begin{definition} [Dynamics invariant\index{Dynamics invariant} to a state transformation]
\label{de:dynamic-invariant}
Let $f$ be a dynamics function and $T$ be a transformation function. The dynamics $f$ is said to be invariant to a transformation $T$ on the state $x$ if and only if for all $x$, for all other distinct states $x'$, and for all possible control actions $u$, the following equality holds:
\[
f(x, x', u) = f(T(x), x', u)
\].
This means that the dynamics do not change when the state $x$ is transformed by $T$.
\end{definition}

Given Definition \ref{de:dynamic-invariant}, let us introduce the following theorems.

\begin{theorem}
\label{th:invariant-equivalent}
\index{Dynamics invariant}
The following statements are equivalent:

(A) The dynamics are invariant under all transformation $T$ applied to the state $x$.

(B) The dynamics function $f$ explicitly does not depend on the state $x$.

(C) Without loss of generality, a state $x$ in a dynamic function $f$ can always be assumed to be fixed at a particular value.
\end{theorem}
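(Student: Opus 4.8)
The plan is to prove the cycle of equivalences by dispatching the two trivial directions directly and isolating the single substantive implication. I would organize the argument as $(B)\Rightarrow(A)$, then $(A)\Rightarrow(B)$, and finally $(B)\Leftrightarrow(C)$, which together link all three statements. Throughout, I would treat the spectator arguments $x'$ and $u$ in $f(x,x',u)$ as fixed but arbitrary, so that the entire question collapses to the functional dependence of $f$ on its first slot.

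The direction $(B)\Rightarrow(A)$ needs no construction: if $f$ does not depend on $x$, then for every transformation $T$ and every $x$ the equality $f(x,x',u)=f(T(x),x',u)$ holds trivially, since replacing $x$ by $T(x)$ alters only an argument on which $f$ has no dependence. The substantive step is $(A)\Rightarrow(B)$, and this is where I expect to spend the effort. Here I would exploit the universal quantifier over $T$ built into $(A)$. To show $f$ is independent of its first argument, fix any two states $x_1,x_2$ and choose a transformation $T$ with $T(x_1)=x_2$; any set-theoretic map realizing this assignment suffices. Applying the invariance hypothesis to this particular $T$ at the point $x_1$ yields $f(x_1,x',u)=f(T(x_1),x',u)=f(x_2,x',u)$. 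Since $x_1,x_2,x',u$ were arbitrary, $f$ returns the same value regardless of its first argument, which is exactly $(B)$.

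For $(B)\Leftrightarrow(C)$ I would first pin down $(C)$ precisely: the phrase ``without loss of generality a state $x$ may be assumed fixed at a particular value'' should be read as the assertion that the value assigned to the first argument of $f$ never affects the output. Under this reading $(C)$ is literally the claim that $f$ is constant in $x$, so $(B)\Rightarrow(C)$ and $(C)\Rightarrow(B)$ become two phrasings of one fact, and the loop closes.

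I anticipate two obstacles, the main one being interpretive rather than computational. Statement $(C)$ is phrased informally, so the crux of a rigorous proof is committing to the formalization above and arguing that it faithfully captures the intended ``without loss of generality'' meaning; sloppiness here is the only place the equivalence could genuinely break down. The secondary issue is ensuring the class of admissible transformation functions is rich enough that, for every pair $(x_1,x_2)$, some $T$ attains $T(x_1)=x_2$. If transformations were constrained (for instance required to be invertible or continuous), I would instead construct $T$ as a map that merely takes the value $x_2$ at $x_1$, which remains possible on the state spaces considered here, so the argument for $(A)\Rightarrow(B)$ would survive intact.
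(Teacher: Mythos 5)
Your proof is correct, and it takes a route that differs from the paper's in two respects. First, the decomposition: the paper proves the cycle $(A)\Rightarrow(B)\Rightarrow(C)\Rightarrow(A)$, whereas you use $(B)$ as a hub and prove the two biconditionals $(A)\Leftrightarrow(B)$ and $(B)\Leftrightarrow(C)$. Second, and more substantively, your argument for $(A)\Rightarrow(B)$ is sharper than the paper's: the paper simply asserts that invariance under every $T$ ``implies that the dynamics equation $f$ explicitly does not depend on the state $x$,'' while you actually exhibit the witness --- given arbitrary $x_1, x_2$, choose $T$ with $T(x_1)=x_2$ and apply the invariance hypothesis at $x_1$ to conclude $f(x_1,x',u)=f(x_2,x',u)$. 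This makes visible the one genuine hypothesis the equivalence rests on, namely that the class of admissible transformations is rich enough to connect any two states (constant maps suffice), a point the paper's proof never surfaces; your closing remark anticipating constrained transformation classes is exactly the right caveat. Your treatment of $(C)$ is also more careful: you commit to a formalization (the value in the first slot never affects the output) and note that the equivalence with $(B)$ is then essentially definitional, whereas the paper argues $(B)\Rightarrow(C)$ and $(C)\Rightarrow(A)$ informally with the ``without loss of generality'' phrasing left unformalized. What the paper's cycle buys is brevity and symmetry among the three statements; what your version buys is rigor at the single step where the claim could actually fail.
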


\begin{proof}
(A) $\Rightarrow$ (B): If the dynamics are invariant under all transformations $T$ applied to the state $x$, then by definition, for any transformation $T$ on the state $x$, the dynamics function $f(x, x', u)$ satisfies $f(x, x', u) = f(T(x), x', u)$ for all $x$, for all other states $x'$, and for all possible control actions $u$. This implies that the dynamics equation $f$ explicitly does not depend on the state $x$, because the dynamics remain the same even when $x$ is transformed by any $T$. Hence, $f(x, x', u) = F(x', u)$ and $f(T(x), x', u) = F(x', u)$ where $F(x', u)$ does not depend on $x$.

(B) $\Rightarrow$ (C): If the dynamics equation $f$ describing the system does not contain the state $x$, it means that the dynamics are the same for any state $x$. Therefore, without loss of generality, we can always choose our state $x$ so that it is always fixed a particular value.

(C) $\Rightarrow$ (A): If, without loss of generality, the state $x$ can always be assumed to be fixed at a particular value in a dynamics function, then when we can apply any transformation $T$ on the state $x$, and the assumption would still hold. Thus, the dynamics remain the same $f(x, x', u) = f(T(x), x', u)$. Hence, by definition, the dynamics are invariant to a transformed state $x$.

Therefore, we have shown that the statements (A), (B), and (C) are equivalent. This completes the proof. 
\end{proof}

\begin{theorem}
\label{th:translation-solution}
\index{Dynamics invariant}
Let $g(x_o, u)$ be the solution to $x(t)$ from the last observed state $x_o$ of the system and given the control actions $u$ during the elapse time. If the dynamics are invariant to all transformation $T$ on the state $\bar{x}$, then 

$$g(x_o, u)=g\left (\begin{bmatrix} \bar{x}_o\\ x'_o \end{bmatrix}, u\right ) = g\left(\begin{bmatrix} \bar{x}_o-\Delta_x\\ x'_o \end{bmatrix}, u\right)+\begin{bmatrix} \Delta_x\\ 0 \end{bmatrix}$$ 
where $\Delta_x$ is some value, and $x'$ all other distinct states.
\end{theorem}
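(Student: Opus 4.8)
The plan is to reduce the statement to a structural property of the dynamics and then exploit uniqueness of ODE solutions. First I would invoke Theorem \ref{th:invariant-equivalent}: since the dynamics are assumed invariant to all transformations $T$ on the sub-state $\bar{x}$, statement (B) gives that the dynamics function $f$ does not depend on $\bar{x}$ at all. Writing the full state as $\begin{bmatrix} \bar{x} \\ x' \end{bmatrix}$, I would split the dynamics into the block form $\dot{\bar{x}} = f_1(t, x', u)$ and $\dot{x}' = f_2(t, x', u)$, where the crucial point — inherited from the invariance — is that neither right-hand side contains $\bar{x}$.

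Next I would compare the two trajectories generated by the same control $u$ but the two initial conditions $\begin{bmatrix} \bar{x}_o \\ x'_o \end{bmatrix}$ and $\begin{bmatrix} \bar{x}_o - \Delta_x \\ x'_o \end{bmatrix}$. Because the $x'$-dynamics $f_2$ are completely independent of $\bar{x}$, both trajectories start from the same $x'_o$ and obey identical ODEs in $x'$; by uniqueness of solutions they produce the same $x'(t)$ for every $t$ in the elapsed window. This is the step I expect to carry the real content, and it is where I would have to be careful: it relies on a standard existence-and-uniqueness hypothesis (e.g.\ $f_2$ Lipschitz in $x'$) so that the shared $x'$-trajectory is genuinely identical rather than merely one admissible solution.

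With the common trajectory $x'(t)$ in hand, I would turn to the $\bar{x}$-components. Since $\dot{\bar{x}} = f_1(t, x'(t), u(t))$ is the same function of $t$ for both trajectories (the right-hand side sees only the shared $x'(t)$, not $\bar{x}$), the two $\bar{x}$-solutions have identical time-derivatives everywhere; integrating, their difference is frozen at its initial value $\bar{x}_o - (\bar{x}_o - \Delta_x) = \Delta_x$. Hence the $\bar{x}$-component of the first solution equals that of the second plus $\Delta_x$, while the $x'$-components coincide.

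Assembling the two blocks then yields exactly $g\!\left(\begin{bmatrix} \bar{x}_o \\ x'_o \end{bmatrix}, u\right) = g\!\left(\begin{bmatrix} \bar{x}_o - \Delta_x \\ x'_o \end{bmatrix}, u\right) + \begin{bmatrix} \Delta_x \\ 0 \end{bmatrix}$, which is the claimed translation-equivariance. The only genuine obstacle is the uniqueness argument pinning down $x'(t)$; everything else is bookkeeping that follows once the independence of $f$ from $\bar{x}$ has been established.
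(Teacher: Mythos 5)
Your proof is correct, but it takes a genuinely different route from the paper. The paper argues via Taylor expansion: it writes the solution $g(x_o,u)$ as a power series in the elapsed time $\hat{t}$, whose coefficients are $f$ and its total time derivatives $f^{(i)}$ evaluated at the initial state; invariance makes every one of these coefficients insensitive to the shift $\bar{x}_o \mapsto \bar{x}_o - \Delta_x$, so only the zeroth-order (constant) term changes, and the offset $\begin{bmatrix} \Delta_x \\ 0 \end{bmatrix}$ falls out of that single term. Your argument instead uses Theorem \ref{th:invariant-equivalent}(B) to delete $\bar{x}$ from the vector field, splits the ODE into the blocks $\dot{\bar{x}} = f_1(t,x',u)$ and $\dot{x}' = f_2(t,x',u)$, pins down the shared $x'(t)$ trajectory by uniqueness of solutions, and then integrates the $\bar{x}$-block to freeze the initial difference $\Delta_x$ for all time. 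The trade-off is worth noting: the paper's series argument implicitly assumes the solution is analytic in $\hat{t}$ (so that the Taylor series converges to the true solution) and quietly assumes invariance propagates to every total derivative $f^{(i)}$, both of which are left unjustified; your route replaces analyticity with the much weaker Picard--Lindel\"of hypothesis (Lipschitz continuity of $f$ in the state), which you correctly flag as the one assumption carrying real weight, and it makes the mechanism explicit -- the $x'$ dynamics are autonomous in $\bar{x}$, so the two trajectories can only differ by the constant initial offset. Your version is the more rigorous and more general of the two; the paper's version has the virtue of matching the Taylor-operator machinery used later in the T-mano architecture, but as a standalone proof it is the weaker argument.
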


\begin{proof}
    By Taylor expansion, the solution $g(x_o, u)$ can be expressed as,
    \begin{equation*}
    \begin{split}
    g(x_o, u) &= g\left (\begin{bmatrix} \bar{x}_o\\ x'_o \end{bmatrix}, u \right ) = \begin{bmatrix} \bar{x}_o\\ x'_o \end{bmatrix} + f(\bar{x}_o, x'_o, u)\, \hat{t} + \frac{\dot{f}(\bar{x}_o, x'_o, u)}{2!} \, \hat{t}^{\,2} + \sum_{n=3}^{\infty} \frac{f^{(n-1)}(\bar{x}_o, x'_o, u)}{n!} \, \hat{t}^{\,n}\\
    &= \begin{bmatrix} \bar{x}_o-\Delta_x\\ x'_o \end{bmatrix} + f(\bar{x}_o, x'_o, u)\, \hat{t} + \frac{\dot{f}(\bar{x}_o, x'_o, u)}{2!} \, \hat{t}^{\,2} + \sum_{n=3}^{\infty} \frac{f^{(n-1)}(\bar{x}_o, x'_o, u)}{n!} \, \hat{t}^{\,n} + \begin{bmatrix} \Delta_x\\ 0 \end{bmatrix}
    \end{split}
    \end{equation*}

    Given the dynamics are invariant to all transformation $T$ on the state $\bar{x}$, by Definition \ref{de:dynamic-invariant} and Theorem \ref{th:invariant-equivalent}, $f(\bar{x}_o, x'_o, u) = f(\bar{x}_o-\Delta_x, x'_o, u)$ and $f^{(i)}(\bar{x}_o, x'_o, u) = f^{(i)}(\bar{x}_o-\Delta_x, x'_o, u)$ for any $i^{th}$ derivative w.r.t. time and any arbitrary value $\Delta_x$. Hence,
    \begin{equation*}
    \begin{split}
    g(x_o, u) &= \begin{bmatrix} \Delta^o_x\\ x'_o \end{bmatrix} + f(\bar{x}_o, x'_o, u)\, \hat{t} + \frac{\dot{f}(\bar{x}_o, x'_o, u)}{2!} \, \hat{t}^{\,2} + \sum_{n=3}^{\infty} \frac{f^{(n-1)}(\bar{x}_o, x'_o, u)}{n!} \, \hat{t}^{\,n} + \begin{bmatrix} \Delta_x\\ 0 \end{bmatrix}\\
    &= \begin{bmatrix} \Delta^o_x\\ x'_o \end{bmatrix} + f(\Delta^o_x, x'_o, u)\, \hat{t} + \frac{\dot{f}(\Delta^o_x, x'_o, u)}{2!} \, \hat{t}^{\,2} + \sum_{n=3}^{\infty} \frac{f^{(n-1)}(\Delta^o_x, x'_o, u)}{n!} \, \hat{t}^{\,n} + \begin{bmatrix} \Delta_x\\ 0 \end{bmatrix}\\
    &= g\left (\begin{bmatrix} \Delta^o_x\\ x'_o \end{bmatrix}, u \right ) + \begin{bmatrix} \Delta_x\\ 0 \end{bmatrix} \\
    &= g\left (\begin{bmatrix} \bar{x}_o-\Delta_x\\ x'_o \end{bmatrix}, u \right ) + \begin{bmatrix} \Delta_x\\ 0 \end{bmatrix}
    \end{split}
    \end{equation*}
    where $\Delta^o_x := \bar{x}_o-\Delta_x$.
    
    Therefore, we have shown that,
    $$g\left (\begin{bmatrix} \bar{x}_o\\ x'_o \end{bmatrix}, u\right ) = g\left(\begin{bmatrix} \bar{x}_o-\Delta_x\\ x'_o \end{bmatrix}, u\right)+\begin{bmatrix} \Delta_x\\ 0 \end{bmatrix}$$
    for any arbitrary value $\Delta_x$. This completes the proof.
\end{proof}

With the theorems at hand, when training a Hion controller, we assume that the observed state's $x_o$ input must adhere to Assumption \ref{prop:state-dist-condition}.

\begin{assumption} [Observed State Condition]
    \label{prop:state-dist-condition}
  If a Hion controller converges to a set of parameter $\mu$ via parameters optimization while adhering to the dynamics $f$, then each observed state of the system must have either: (1) been sampled from a random variable with a proper distribution, or (2) the dynamics of the system are invariant to all transformation $T$ on the state.
\end{assumption}

The intuition behind Assumption \ref{prop:state-dist-condition} is that if an observed state corresponding random variable does not have a proper distribution, then training the model via sampling from the distribution may not converge. However, if the dynamics are invariant to all transformation on a given state, then during training, the observed state can be assigned some fixed particular value by Theorems \ref{th:invariant-equivalent} and \ref{th:translation-solution} assuming the dynamics hold for the inferred states $x_h$ and controls $u_h$. For inference, the invariant state can be fixed to the particular value, and the solution can be translated via Theorem \ref{th:translation-solution}. An example of states in a system that would satisfy the dynamics invariant property would be position in an unconstrained quadcopter \citep{abougarair2024dynamics}. States in an unconstrained resource allocation optimal control problem (e.g., queen-worker insect problem) would not satisfy the condition or have a proper distribution \citep{winkel2013discover, oster1978caste}.

The reference state's $x_r$ sampled for training are chosen from a random variable $\mathbf{X_r}$ that depends on the observed state training input's random variable $\mathbf{X_o}$. Unless otherwise considered, a reference state is always assumed to be feasible or nearly feasible to allow for tracking in a short time span. As such, if the state is continuous and unconstrained, the reference state input random variable can be assumed to be
$$
    \quad \mathbf{X_r} = x^r_o + \mathbf{E}, \quad \mathbf{E} \sim \mathcal{N}(0, \sigma^2)
$$
where $x^r_o$ is the subset of the observed states relevant to the reference, and $\sigma^2$ defines the variance of the reference states with respect to the observed states.

\subsection{Controller Architecture}

As presented in the problem statement, a Hion controller $h(\cdot)$ is described by
\begin{equation}
x_h, u_h, \lambda_h = h(\hat{t}, x_o, x_r; \mu) 
\end{equation}
where elapsed time $\hat{t}$, last observed state $x_o$, and reference state $x_r$ describe the inputs to the model, sampled from a distribution. On the other hand, the expected state $x_h$, corresponding control $u_h$, and co-state (i.e., Lagrange multiplier) $\lambda_h$ describe the generative output. $\mu$ is the set of parameters dictating the behavior of the model. 

As it is desired that the neural network adhere to the dynamics and akin to PINN models, primitive states (i.e., zero-order derivative w.r.t. time)\index{Primitive states} $x^h_{[0]}$ and controls $u^h_{[0]}$\index{Primitive controls} are inferred. Any higher-order derivative is obtained by differentiation.
\begin{equation}
\label{eq:hion-state-diff}
    x^h_{[i+1]} = \dot{x}^h_{[i]} = \left(x^h_{[0]}\right)^{(i)} = \frac{\partial}{\partial \hat{t}} \left[ x^h_{[i]} \right] \quad\quad
u^h_{[i+1]} = \dot{u}^h_{[i]} = \left(u^h_{[0]}\right)^{(i)} = \frac{\partial}{\partial \hat{t}} \left[ u^h_{[i]} \right]
\end{equation}

The output of the model, $x_h$ and $u_h$, denote the vector of the states $x^h_{[i]}$ and controls $u^h_{[i]}$ relevant to the system, respectively. These may, however, not follow accurate dynamics or be optimal when modeled by a conventional neural network model. 

\subsubsection{Taylored Multi-Faceted Approach for Neural ODE and Optimal Control}

The \textbf{problem of interest} present unique opportunity to build a neural network architecture tailored for the task. Taylored Multi-Faceted Approach for Neural ODE and Optimal Control (T-mano) is a novel architecture for optimal control and state estimation of dynamical systems based on Taylor expansion. It ensures the initial conditions and accuracy of some of the systems dynamics are maintained in inference regardless of the parameters optimization. The architecture is presented in Figure \ref{gr:t-mano} and consists of four main stages.

\begin{figure}[H]
  \centering
  \includegraphics[width=\textwidth]{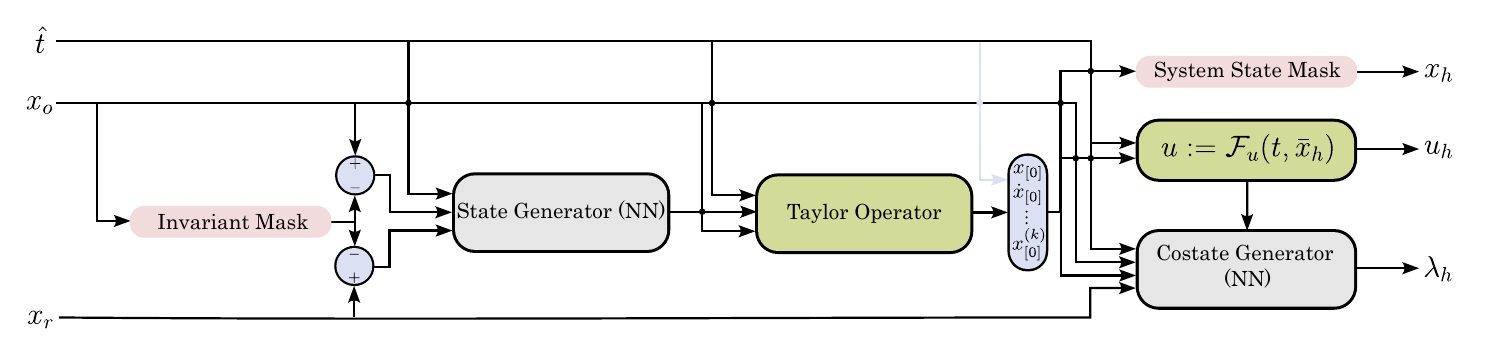}
  \caption{T-Mano Architecture for Control. It consists of four main stages that ensure that initial conditions and partial dynamics accuracy are maintained.}
  \label{gr:t-mano}
\end{figure}

Initially, the invariant mask filters observed states that are considered invariant to all transformation under Definition \ref{de:dynamic-invariant} and fixes them to zero before the state generator. The results does not affect the inference of the model as a solution to the system was proven to generalize to other invariant states in Theorem \ref{th:translation-solution} when fixed to a value. However, fixing these invariant states to a value (e.g., zero in this case) was empirically found to improve convergence and generalization to out-of-distribution invariant states.

In the first main stage, the state generator (as a neural network) provides a higher-order terms state function $\hat{x}_h(\hat{t})$ to the Taylor operator before any future states are inferred. It can be modeled by a multi-layer perceptron (MLP),
\begin{equation}
    \label{eq:hion-state-def}
    \begin{split}
        \hat{x}_h &= h_{x}(\hat{t}, \bar{x}_o, \bar{x}_r; \mu_x) = (h_k \circ \sigma \circ ... \circ h_2 \circ \sigma \circ h_1) (\hat{t}, \bar{x}_o, \bar{x}_r) \\
        h_j(z) &=  W_j z + b_j \\
    \end{split}
\end{equation}
where $\mu_x = \{W_1, b_1, ..., W_k, b_k\}$ is the set of all the learnable parameters of the model, and $\sigma(\cdot)$ is the non-linear activation function. For the experiments explored, $\sigma(\cdot) = \text{SiLU}(\cdot)$.

The Taylor operator $\mathcal{T}(\cdot)$ is a mapping of the observed state $x_o$ and a higher-order terms state function $\hat{x}(\hat{t})$ to a state estimation (or trajectory) that ensure that the initial condition of the system $x_o$ is respected. It is defined as:
\begin{equation}
    \mathcal{T}(\hat{t}, x_o, \hat{x}(\hat{t}))  = \sum_{n=0}^{k-1} \left[ \frac{x^o_{[n]}}{n!} \, \hat{t}^{\,n}\right ] + \hat{x}(\hat{t}) \, \hat{t}^{\,k}
\end{equation}
where $x^o_{[i]}$ represents the $i^{th}$-order derivatives w.r.t. time of the primitive state $x^o_{[0]}$ in the observed states $x_o$, $k-1$ is the highest known order, and $\hat{x}(\hat{t})$ is a higher-order terms state function. In the architecture, the higher-order terms state function is provided by the state generator $\hat{x}_h(\hat{t})$ and the operator is applied to each primitive state.

The theoretical basis for the Taylor operator comes from Taylor expansion that suggests that a solution $x_{[0]}(t)$ to the system can be expressed as:
\begin{equation}
    \label{eq:mlp-definition-state}
    \begin{split}
    x_{[0]}(t) &= x^o_{[0]} + x^o_{[1]}\, \hat{t} + \frac{x^o_{[2]}}{2!} \, \hat{t}^{\,2} + \sum_{n=3}^{\infty} \frac{x^o_{[n]}}{n!} \, \hat{t}^{\,n}
    \end{split}
\end{equation}.

However, given that observed state only contains $k-1$ known orders of state differentiability, the rest of the higher-order term states must be found.  Fortunately, also by Taylor expansion, they can be decomposed to a single unknown function $\hat{x}$:
\begin{equation}
    \begin{split}
    x_{[0]}(t) &= x^o_{[0]} + x^o_{[1]}\, \hat{t} + \frac{x^o_{[2]}}{2!} \, \hat{t}^{\,2} + \sum_{n=3}^{\infty} \frac{x^o_{[n]}}{n!} \, \hat{t}^{\,n} = \sum_{n=0}^{\infty} \left[ \frac{x^o_{[n]}}{n!} \, \hat{t}^{\,n}\right]\\
    &= \sum_{n=0}^{k-1} \left[ \frac{x^o_{[n]}}{n!} \, \hat{t}^{\,n}\right ] + \sum_{n=k}^{\infty} \left[ \frac{x^o_{[n]}}{n!} \, \hat{t}^{\,n}\right]\\
    &= \sum_{n=0}^{k-1} \left[ \frac{x^o_{[n]}}{n!} \, \hat{t}^{\,n}\right ] + \sum_{n=0}^{\infty} \left[ \frac{x^o_{[n+k]}}{(n+k)!} \, \hat{t}^{\,n}\right]  \, \hat{t}^{\,k}\\
    &= \sum_{n=0}^{k-1} \left[ \frac{x^o_{[n]}}{n!} \, \hat{t}^{\,n}\right ] + \sum_{n=0}^{\infty} \left[ \frac{n!\,x^o_{[n+k]}}{n!(n+k)!} \, \hat{t}^{\,n}\right]  \, \hat{t}^{\,k}\\
    &= \sum_{n=0}^{k-1} \left[ \frac{x^o_{[n]}}{n!} \, \hat{t}^{\,n}\right ] + \sum_{n=0}^{\infty} \left[ \frac{\tilde{x}^o_{[n]}}{n!} \, \hat{t}^{\,n}\right]  \, \hat{t}^{\,k}\\
    &= \sum_{n=0}^{k-1} \left[ \frac{x^o_{[n]}}{n!} \, \hat{t}^{\,n}\right ] + \hat{x}(\hat{t})  \, \hat{t}^{\,k}
    \end{split}
\end{equation}
where $\tilde{x}^o_{[n]} := \frac{n!\, x^o_{[n+k]}}{(n+k)!}$ and $\hat{x}(\hat{t})$ is the unknown higher-order terms state function.

After the Taylor operator, a vector of state prediction $\bar{x}_h$ can be obtained via differentiation of the inferred primitive states $x^h_{[0]}$ w.r.t. the elapse time $\hat{t}$ as described in Eq. \eqref{eq:hion-state-diff}. A subset of these states $x_h$ is outputted as the system state prediction.

The control definition serves as the third main stage. In this stage, the control $u_h$ is defined using the ODEs and the vector of high-order state estimations $\bar{x}_h$. It is described in a way that no residual exists between the estimation of the states and the dynamics of the system for ODEs that contain a control point. The control can be interpreted as a relation of the system states.

\begin{equation}
    u_h := \mathcal{F}_u(t, \bar{x}_h)
\end{equation}

Now that a state estimation is inferred and the control associated with it is defined, the final stage is to generate a set of co-states to guide the optimality. The co-state generator is a neural network, that computes the Lagrangian multiplier needed to satisfy PMP. In the experiments presented, it is also modeled by a MLP,
\begin{equation}
    \label{eq:mlp-definition-costate}
    \begin{split}
        \lambda_h &= h_{\lambda}(\hat{t}, x_o, x_r, \bar{x}_h, u_h; \mu_\lambda) = (h_k \circ \sigma \circ ... \circ h_2 \circ \sigma \circ h_1) (\hat{t}, x_o, x_r, \bar{x}_h, u_h) \\
        h_j(z) &=  W_j z + b_j \\
    \end{split}
\end{equation}
where $\mu_\lambda$ is the set of learnable parameters for the co-states generator and $\sigma(\cdot) = \text{SiLU}(\cdot)$.

\subsection{Pontryagin's Maximum/Minimum Principle and Learning Algorithm}

Pontryagin's Maximum/Minimum Principle (PMP) was relied upon to guide the parameters optimization of our neural controller for its computational efficiency. Devised by Lev Pontryagin in the Soviet Union, PMP presents a set of necessary conditions for optimal control of deterministic dynamics \citep{ma2021optimal, hwang2022problem, todorov2012lecture, todorov2006optimal}. It can be understood through the method of Lagrange multipliers where a function, known as the Hamiltonian $H$, is defined that must be maximized/minimized. The function ensured that the system dynamics are respected when considering the Lagrangian $L$ optimization. The optimal control of the system given the Lagrangian is determined by finding the control $u$ that is at an extreme of the Hamiltonian $H$. Given PMP, necessary conditions for optimally of a controller that drives a system towards a reference state $x_r(t)$ in a TPVBP are:
\begin{equation}
    \label{eq:pmp-conditions}
    \begin{array}{c}
    \dot{\lambda}(t)^T = - \frac{\partial}{\partial x} H(t, x(t), u(t), \lambda(t))\\
    \lambda(t_f)^T =  \frac{\partial}{\partial x} G(t_f, x(t_f), \nu) \\
    \frac{\partial}{\partial u} H(t, x(t), u(t), \lambda(t)) = 0 \\
    \end{array}
\end{equation}
where $\nu$ a free variable, and,
\begin{equation}
\label{eq:Hamiltonian-h}
    H(t, x(t), u(t), \lambda(t)) := L(t, x(t), x_r(t), u(t)) + \lambda(t)^T f(t, x(t), u(t))
\end{equation}
\begin{equation}
\label{eq:Hamiltonian-g}
    G(t_f, x(t_f), \nu) := \nu^T \left ( x(t_f) - x_r(t) \right ) 
\end{equation}.


In addition to these conditions for optimality, we must also ensure that dynamics of the system are maintained in estimation. Hence, we must also include the environment's ODEs and boundary as necessary conditions for dynamics accuracy:

\begin{equation}
\begin{array}{c}
x(0) = x_o\\
x(t_f) = x_r(t)\\
\mathcal{F}(t, x(t), u(t)) = 0\\
\end{array}
\end{equation}
.

Now that all the conditions for accurate and optimal trajectories are known. They provide us a set of loss functions by mean-square error (MSE) to guide our learning algorithms:
\begin{subequations}
    \label{eq:hion-loss}
    \begin{gather}
        \label{eq:-hion-loss-init}
        \frac{1}{n_x}\left\lVert x_h(0) - x_o \right\lVert^2\\
        \label{eq:-hion-loss-term}
        \frac{1}{n_{x_r}}\left\lVert x^r_h(t_f) - x_r \right\lVert^2\\
        \label{eq:-hion-loss-ode}
        \frac{1}{n_\mathcal{F}}\left\lVert \mathcal{F}(\hat{t}, x_h, u_h) \right\lVert^2 \\
        \label{eq:-hion-loss-cos}
        \frac{1}{n_x} \left\lVert \dot{\lambda_h}^T + \frac{\partial}{\partial x} H(\hat{t}, x_h, u_h, \lambda_h) \right\lVert^2 \\
        \label{eq:-hion-loss-te-cos}
        \frac{1}{n_{x_{!r}}} \left\lVert \lambda_h^{!r}(t_f)^T \right\lVert^2 \\
        \label{eq:-hion-loss-u}
        \frac{1}{n_u} \left\lVert \frac{\partial}{\partial u} H(\hat{t}, x_h, u_h, \lambda_h) \right\lVert^2
    \end{gather}
\end{subequations}
where $x^r_h$ and $\lambda^{!r}_h(t_f)$ refers to the associated states and not associated co-states with a reference state, respectively, and $n_{g}$ defines the dimensionality of a vector $g$.

However, not all of these losses are needed to train a Hion controller. In the aforementioned T-mano architecture, the Taylor operator ensures that the initial boundary condition for the states is maintained in inference, allowing \eqref{eq:-hion-loss-init} not to be required during training. If the problem at hand does not rely on a reference state, then \eqref{eq:-hion-loss-term} is also not needed. Additionally, the control definition ensures that any ODE that contains a single control point is satisfied, thus this subset of ODEs do not need to be included in \eqref{eq:-hion-loss-ode}. Lastly, the co-state generator may also be amended to use the Taylor operator, fixing the co-state terminal condition. This would remove the need for \eqref{eq:-hion-loss-te-cos}.

Following the conversion set forth by PINN models and applying a gradient decent based optimization, Algorithm \ref{alg:hion-training} is used to train and fine-tune our Hion controllers through the differential equations \eqref{eq:hion-loss}. The algorithm initially consists of sampling a number of observed states with individuality dependent references. They represent the start and end of random trajectories for which solutions need to be found at a given epoch. Three time interval are then considered for evaluating the model. First, the controller is evaluated when the elapse time $\hat{t} = 0$. The evaluation is used to measure loss \eqref{eq:-hion-loss-init} at which the inferred state $x_h(0)$ must equal the observed state $x_o$. The second time interval is the terminal time $\hat{t} = t_f$ at which losses \eqref{eq:-hion-loss-term} and \eqref{eq:-hion-loss-te-cos} are measured. It represents the time at which the reference state must be reached and the boundary condition for the co-states satisfied. Lastly, we consider transient time intervals $\hat{t}$ sampled from an uniform distribution between the boundaries. At these intervals, the transient losses \eqref{eq:-hion-loss-ode}, \eqref{eq:-hion-loss-cos}, and \eqref{eq:-hion-loss-u} are measured to ensure that the dynamics and optimality conditions are learned. Note that \eqref{eq:-hion-loss-ode}, \eqref{eq:-hion-loss-cos}, and \eqref{eq:-hion-loss-u} may optionally be applied at $\hat{t} = 0$ or $\hat{t} = t_f$ as well. The evaluations of these losses are then used to train the controller via backpropagation and a chosen optimizer.

\begin{algorithm}[ht]
\caption{Training and fine-tuning algorithm for Hion controllers}
\label{alg:hion-training}
\begin{algorithmic} 
    \Require $n_{E}$, $\mu$, $h(\cdot; \mu)$, $\mathcal{F}(\cdot)$, $f(\cdot)$, $L(\cdot)$, $t_f$, $\mathbf{X_o}$, $\mathbf{X_r}$
    \For{$n_E$ epochs}
        \State Sample $x_o$ and $x_r$ from $\mathbf{X_o}$ and $\mathbf{X_r}$, respectively
        \State Sample auxiliary elapsed time $\hat{t}$ from $\mathcal{U}(0, t_f)$
        \State Compute $x_h(0),\, u_h(0),\, \lambda_h(0) = h(0, x_o, x_r; \mu)$
        \State Compute $x_h(t_f),\, u_h(t_f),\, \lambda_h(t_f) = h(t_f, x_o, x_r; \mu)$
        \State Compute $x_h,\, u_h,\, \lambda_h = h(\hat{t}, x_o, x_f; \mu)$
        \State Evaluate and sum the MSE losses relevant to the problem in equation \eqref{eq:hion-loss} into $f_h$
        \State Compute the gradient of $f_h$ w.r.t. $\mu$.$\left(\text{i.e.,}\frac{\partial f_h}{\partial \mu}\right)$
        \State Update $\mu$ using $\frac{\partial f_h}{\partial \mu}$ and an optimizer \Comment{Adam was used for results shown}
    \EndFor
\end{algorithmic}
\end{algorithm}

\section{Experiments and Results}
\label{sect:hion-results}

To evaluate the capabilities of the proposed architecture T-mano, we will use it to find solutions to the \textbf{problem of interest} for a set of dynamical systems (i.e, environments).

\subsection{Dynamical Systems}

Two systems are considered -- one linear and one non-linear. The linear system in a basic unstable second-order linear system that serves as a baseline. The non-linear system is a Van der Pol oscillator. It models an oscillating system with non-linear damping.

\subsubsection*{Second-Order Linear System}

Second-order linear systems model a range of different phenomenons in nature. One such phenomenon is the classical spring-mass problem that explores the movement of a mass attached to a spring as a control force is applied to it over time. For our experiments, we have chosen an unstable system of a mass that is not attached to spring. It is described by the following ODE,
\begin{equation}
    \mathcal{F}(t, \bar{x}(t), u(t)) = \ddot{x}_{[0]} - u = 0
\end{equation}
and its corresponding dynamics function,
\begin{equation}
    \dot{x} = f(x, u) = \begin{bmatrix} x_{[1]} \\ u\end{bmatrix}
\end{equation}
where $x_{[0]}$ is the primitive state of the second-order linear system (i.e., displacement of the mass) and $u$ is an applied force over time.

For the observed state distribution used to train the Hion controller, we have chosen the following random state vector:
\begin{equation}
    \mathbf{X_o} = \begin{bmatrix}
     X^o_{[0]} \\ X^o_{[1]} \end{bmatrix} \sim \begin{bmatrix}
     \mathcal{U}(-5, 5) \\ \mathcal{N}\left (0, 1 \right) \end{bmatrix}
\end{equation}
where $X^o_{[0]}$ and $X^o_{[1]}$ are the random variables from which the observed displacement $x^o_{[0]}$ and velocity $x^o_{[1]}$ states are sampled from. For the reference state's random variable, we have opted to add a white Gaussian noise with a standard deviation of $1$. A terminal time $t_f$ of 2 seconds was considered for the system.

To guide the transient behavior of the system, we have opted for a quadratic Lagrangian function that minimizes that control applied and velocity state over time:
\begin{equation}
    \label{eq:linear-cost}
    L(t, x(t), x_r(t), u(t)) = \frac{1}{2} u(t)^2 + x_{[1]}(t)^2
\end{equation}.

It should be noted that the system is invariant under all transformations on its sole primitive state $x_{[0]}$ given Definition \ref{de:dynamic-invariant}. As such, for the experiments presented, the invariant mask was applied to fix the primitive observed state $x^o_{[0]}$ to zero before going to the state generator.

\subsubsection*{Van der Pol Oscillator}

 Van der Pol oscillators were introduced by Balthasar van der Pol to describe the change in current inside a triode that is part of an electronic circuit \cite{hafeez2015analytical, abell2023chapter}. It is second-order, non-conservative, oscillating system with non-linear damping properties.  However, Van der Pol oscillators are now used to describes a wide range of oscillatory processes in a variety disciplines including, but not limited to, modeling irregular heart rate for pace maker design. It is standard benchmark for optimal control by neural controller \cite{antonelo2024physics, andersson2012dynamic}. For our experiments, we have chosen the following variant with the ODE,
\begin{equation}
    \mathcal{F}(t, \bar{x}(t), u(t)) = \ddot{x}_{[0]} - \left (1 - x_{[0]}^2 \right) \dot{x}_{[0]} + x_{[0]} - u = 0
\end{equation}
with its corresponding dynamics function $f$,
\begin{equation}
    \dot{x} = f(x, u) = \begin{bmatrix} x_{[1]} \\ \left (1 - x_{[0]}^2 \right) x_{[1]} - x_{[0]} + u \end{bmatrix}
\end{equation}
where the primitive state $x_{[0]}$ of the system represents current and $u$ is an external force.

For the observed state distribution used to train the Hion controller, we have also decided on the following random state vector:

\begin{equation}
    \mathbf{X_o} = \begin{bmatrix}
     X^o_{[0]} \\ X^o_{[1]} \end{bmatrix} \sim \begin{bmatrix}
     \mathcal{U}(-5, 5) \\ \mathcal{N}\left (0, 1 \right) \end{bmatrix}
\end{equation}
where $X^o_{[0]}$ and $X^o_{[1]}$ are the random variables from which the observed current $x^o_{[0]}$ and change in current w.r.t. time $x^o_{[1]}$ states are sampled from, respectively. For the reference state's random variable, we have also opted to add a white Gaussian noise with a standard deviation of $1$. A terminal time $t_f$ of 5 seconds was considered.

For the transient behavior of the Van der Pol oscillator, we considered two distinct set of Lagrangian functions. When trying to reduced the velocity of the system, we used the following quadratic equation:
\begin{equation}
    \label{eq:van-cost-1}
    L(t, x(t), x_r(t), u(t)) = \kappa\, x_{[1]}(t)^2
\end{equation}
and, when trying to improve tracking of a reference signal, we used
\begin{equation}
    \label{eq:van-cost-2}
    L(t, x(t), x_r(t), u(t)) = \kappa\, (x_{[0]}(t) - x_r(t))^2
\end{equation}
where $\kappa$, in both sets, is a scalar hyper-parameter that regulates the intensity of the transient cost.

\subsection{Two Point Boundary Value Problem} \index{Two point boundary value problem (TPBVP)}

A two point boundary value problem (TPBVP) is an optimal control problem where the system is required to satisfy boundary conditions at both an initial and final time interval. Hion controllers, in their most basic form, can solve these TPBVP where the terminal time $t_f$ is the time by which the final state must be reached. We can, additionally, not only drive the system towards our desired final state, but also specify the transient behavior we would like the system to have.

Figure \ref{gr:hion-tpbvp} illustrates the solution for the aforementioned systems obtained by Hion controllers using the T-mano architecture. Figure \ref{gr-hion-tpbvp-1} shows the solution for a second-order linear system when tasked with minimizing control effort and speed. Figure \ref{gr-hion-tpbvp-2} presents the solution for a Van der Pol oscillator when tasked solely with minimizing speed. In both cases, the system was driven to the final state (i.e., the reference state or $1$) while maintaining the specified transient profile. For instance, in the case of the Van der Pol oscillator, the system was driven to a constant speed, which was then maintained until the system reached the final state. This represents an optimal solution to maintain minimal speed.

\begin{figure}[ht]
  \centering
    \subcaptionbox
      {Second-order linear system.\label{gr-hion-tpbvp-1}}
      {\bfseries \includegraphics[width=0.49\textwidth]{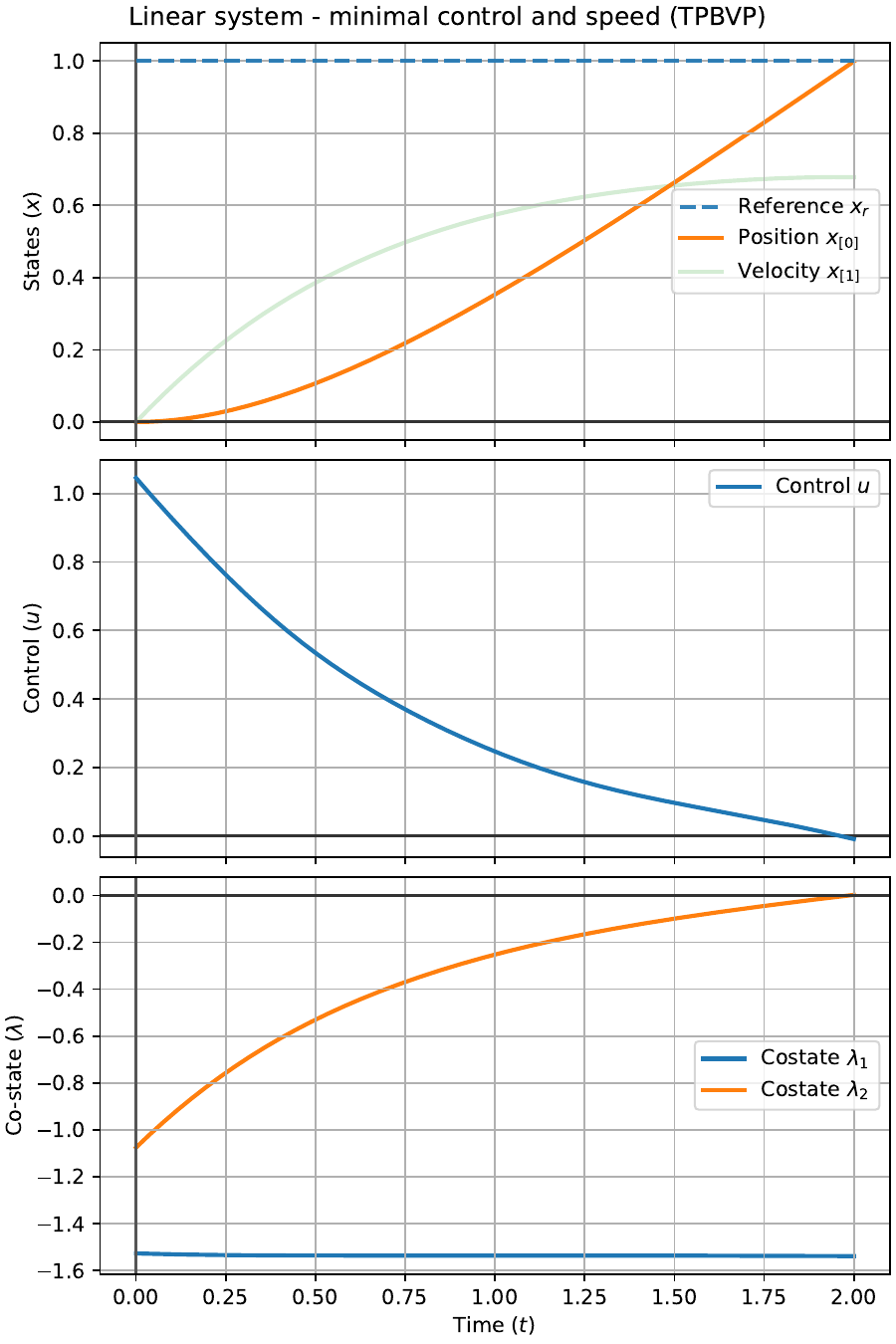}}%
    \subcaptionbox
      {Van der Pol oscillator.\label{gr-hion-tpbvp-2}}
      {\bfseries \includegraphics[width=0.49\textwidth]{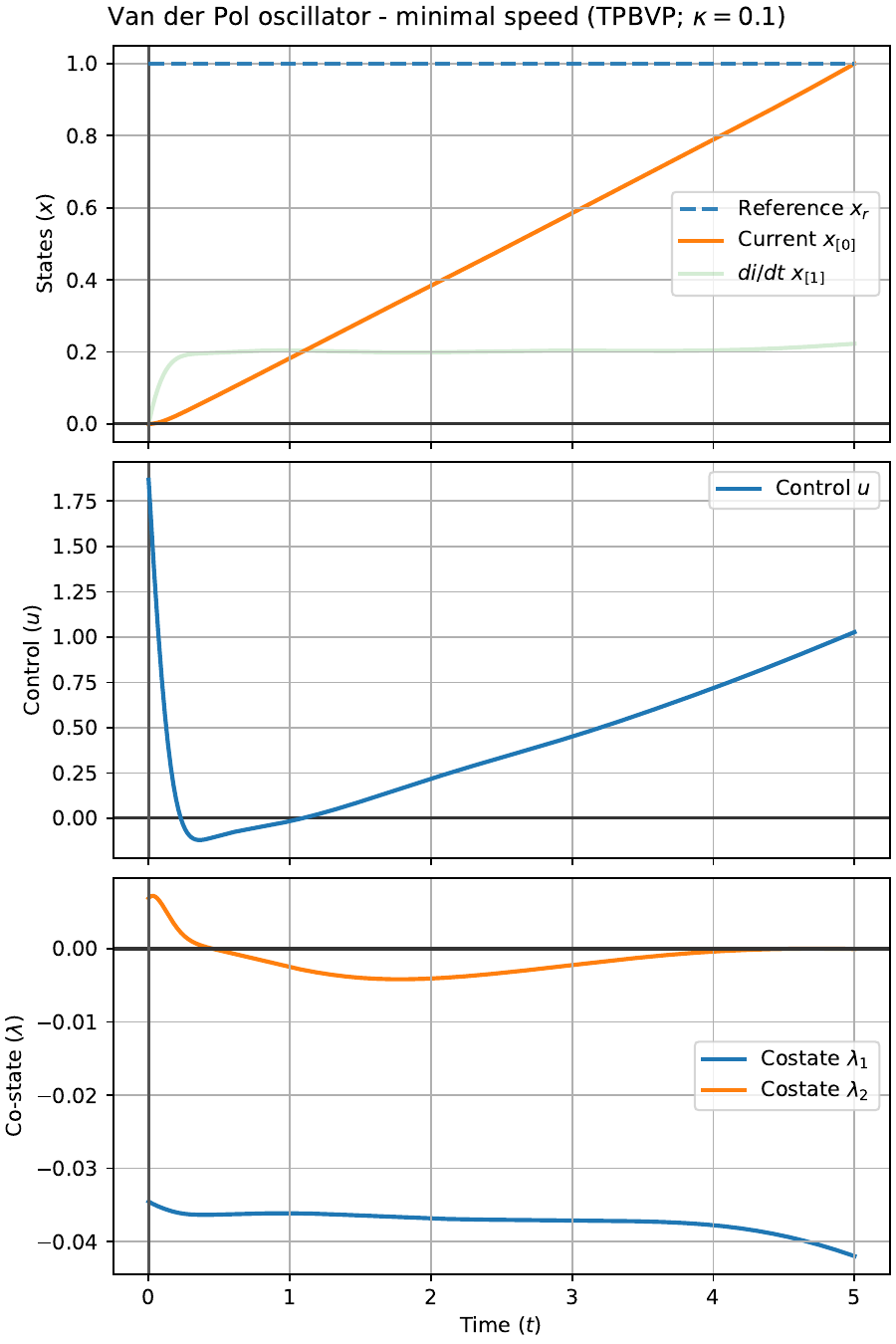}}%
  \caption{Two point boundary value problem solutions using the T-mano architecture.}
  \label{gr:hion-tpbvp}
\end{figure}

\subsection{Closed-Loop Control}

Hion controllers can be extended to solve control strategies for closed-loop systems. Given the model's ability to solve TPBVP for variable observed states and reference states, and estimate future states, we can consider closed-loop systems where there is a gap between observation of the state in the environment. This is similar to systems encountered in MPC problems and real-world scenarios where sensors only allow for sparse sampling of the system states. Unless otherwise stated, note that from now on each vertical tick on a plot indicates when a new state is observed (i.e., sampled from the simulation environment).

Figure \ref{gr:hion-result-u-velocity} considers the second-order linear system when controlled to have minimal control and speed as described by the cost function in \eqref{eq:linear-cost}. As seen, T-mano is capable of driving the system to a reference state while maintaining the optimality requested between sampling periods (or phases). Additionally, as the model iteratively obtains the solution to a TPBVP, at the end of each iteration, the reference state is reached unless the reference changes before the next state can be observed. If the reference state $x_r$ is updated before a new state can be observed from the actual environment, then the current estimated state $x_h$ can be assumed to be a new observed state $x_o := x_h$ and the elapsed time $\hat{t}$ is reset to $0$. This behavior can be seen when the reference was updated to $0$ at $t=15$ before the next sample of the environment state was captured in the next tick. In this instance, the model moved toward the new reference, and when a new observed state was captured from the actual system, it readjusted its control once again. It should be noted that $\Delta$ denotes the sampling period.

\begin{figure}[ht]
  \centering
  \includegraphics[width=\textwidth]{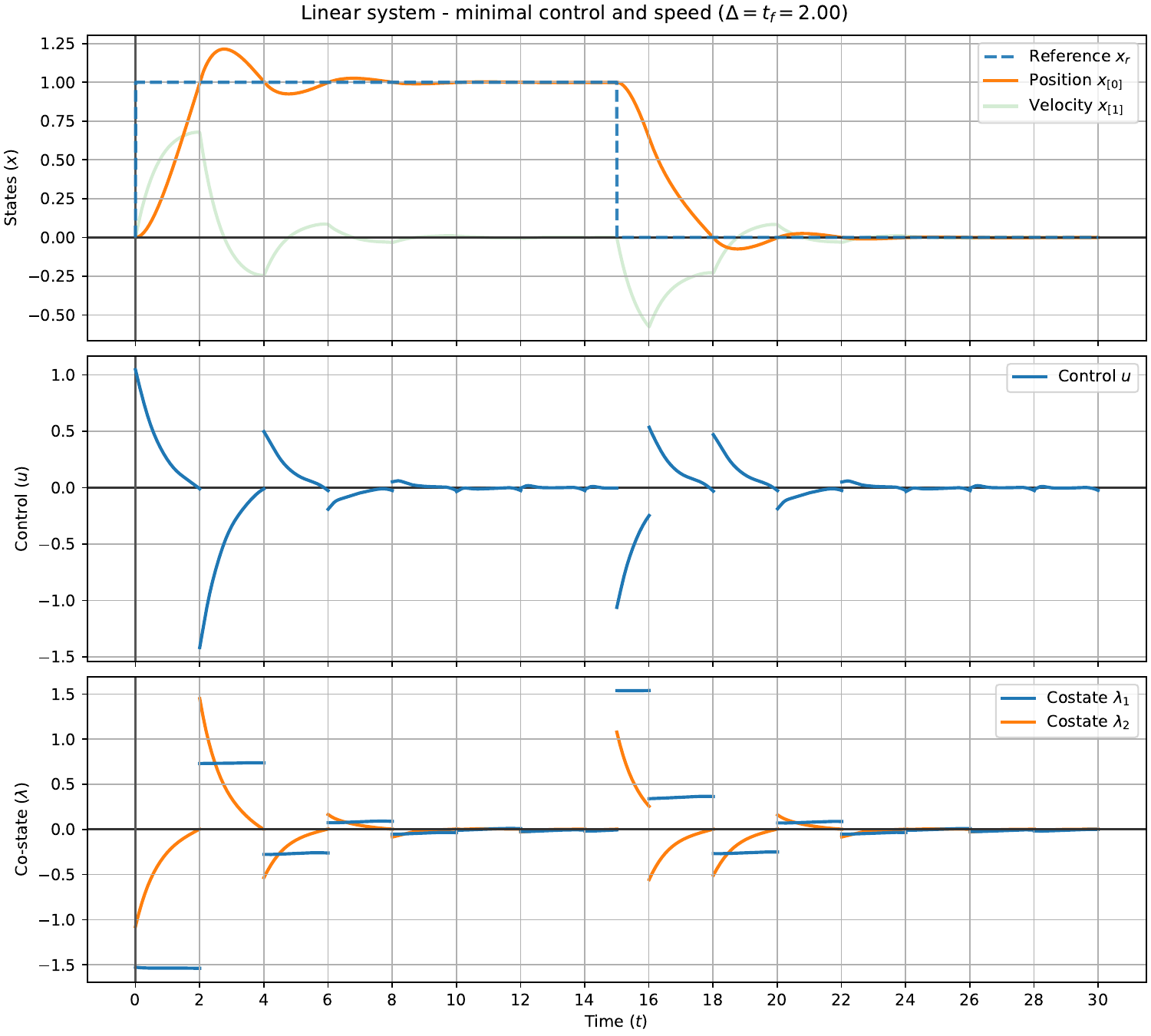}
  \caption{Linear system controlled to minimize the necessary control applied and velocity as a priority.}
  \label{gr:hion-result-u-velocity}
\end{figure}

Figure \ref{gr:hion-result-velocity} presents what occurs when controlling a non-linear Van der Pol oscillator to have minimal speed between observed states. The desired behavior is represented in \eqref{eq:van-cost-1}. As seen, T-mano controls the system to have just the constant speed needed between observed states to reach the reference state before a new state is sampled. As aforementioned, this is a known optimal solution to the problem. In the plot, it can also be seen that a maneuver with the controls occurs whenever a new state is sampled. It is speculated that this occurs because, while the reference state may be obtained by the terminal time $t_f$, some momentum exist, necessitating a non-linear maneuver to cancel it and stay on the desired course. To a lesser extent, some error between the numerical simulation of the environment and the state estimation emerges, and thus when the numerical simulation is sampled, the model corrects itself to eliminate the error.

\begin{figure}[ht]
  \centering
  \includegraphics[width=\textwidth]{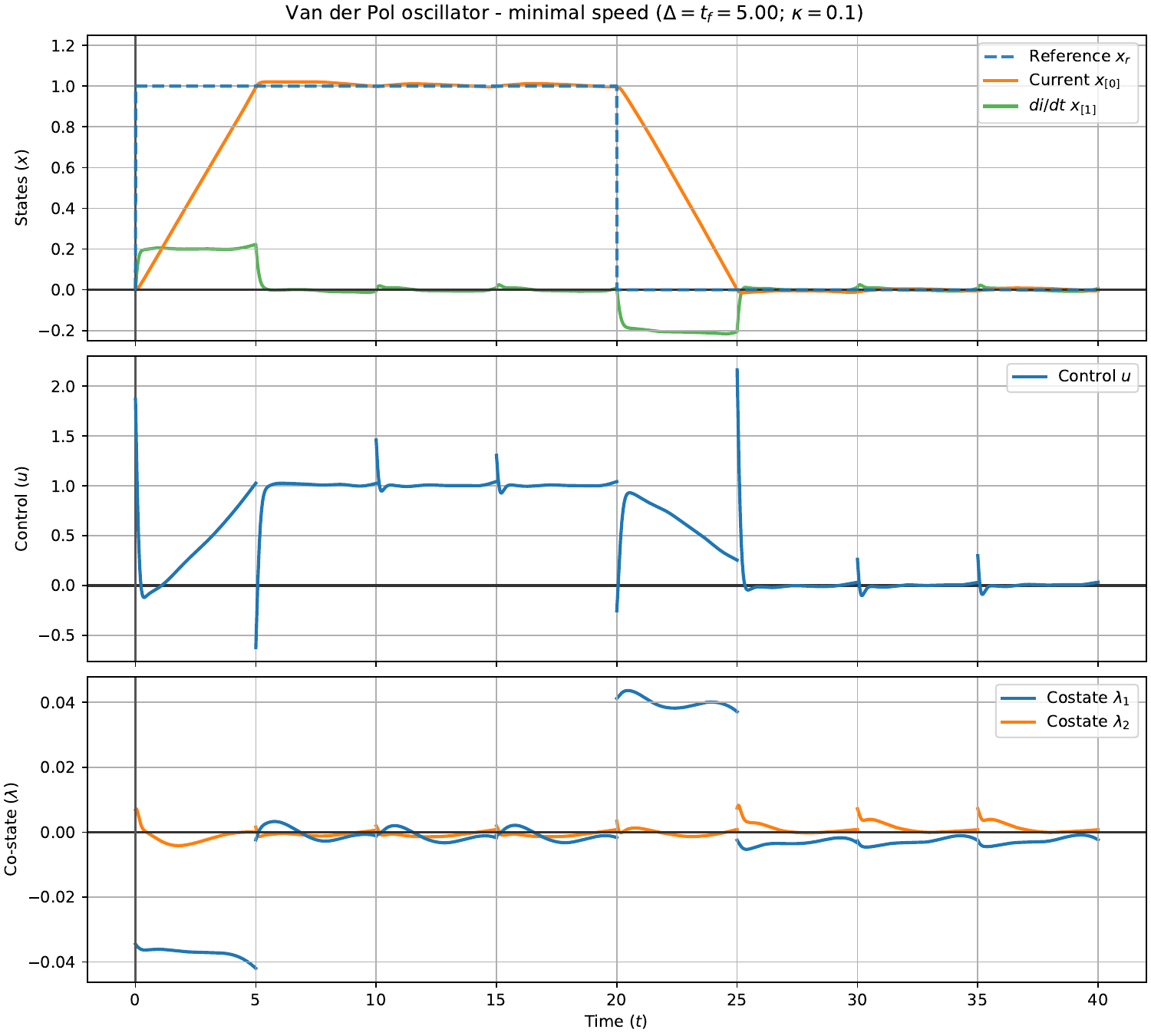}
  \caption{Van der Pol oscillator controlled to have minimal speed between sampling periods as a priority.}
  \label{gr:hion-result-velocity}
\end{figure}

In Figure \ref{gr:hion-result-track}, the tracking problem is explored, where the model is tasked with prioritizing the tracking of the reference state with no emphasis on the transient velocity. The objective to prioritize tracking of the reference is embedded using the Lagrangian equation \eqref{eq:van-cost-2}. As illustrated, the T-mano model quickly drives the current state to the reference before the new state is observed and maintains it. This behavior is a departure from the one seen in Figure \ref{gr:hion-result-velocity}. However, the maneuvering observed when a new state is sampled can still be seen due to the same conditions discussed previously. This time, however, the magnitude of the control is larger, as it is desired that the model bring the system to the reference state rapidly.

\begin{figure}[ht]
  \centering
  \includegraphics[width=\textwidth]{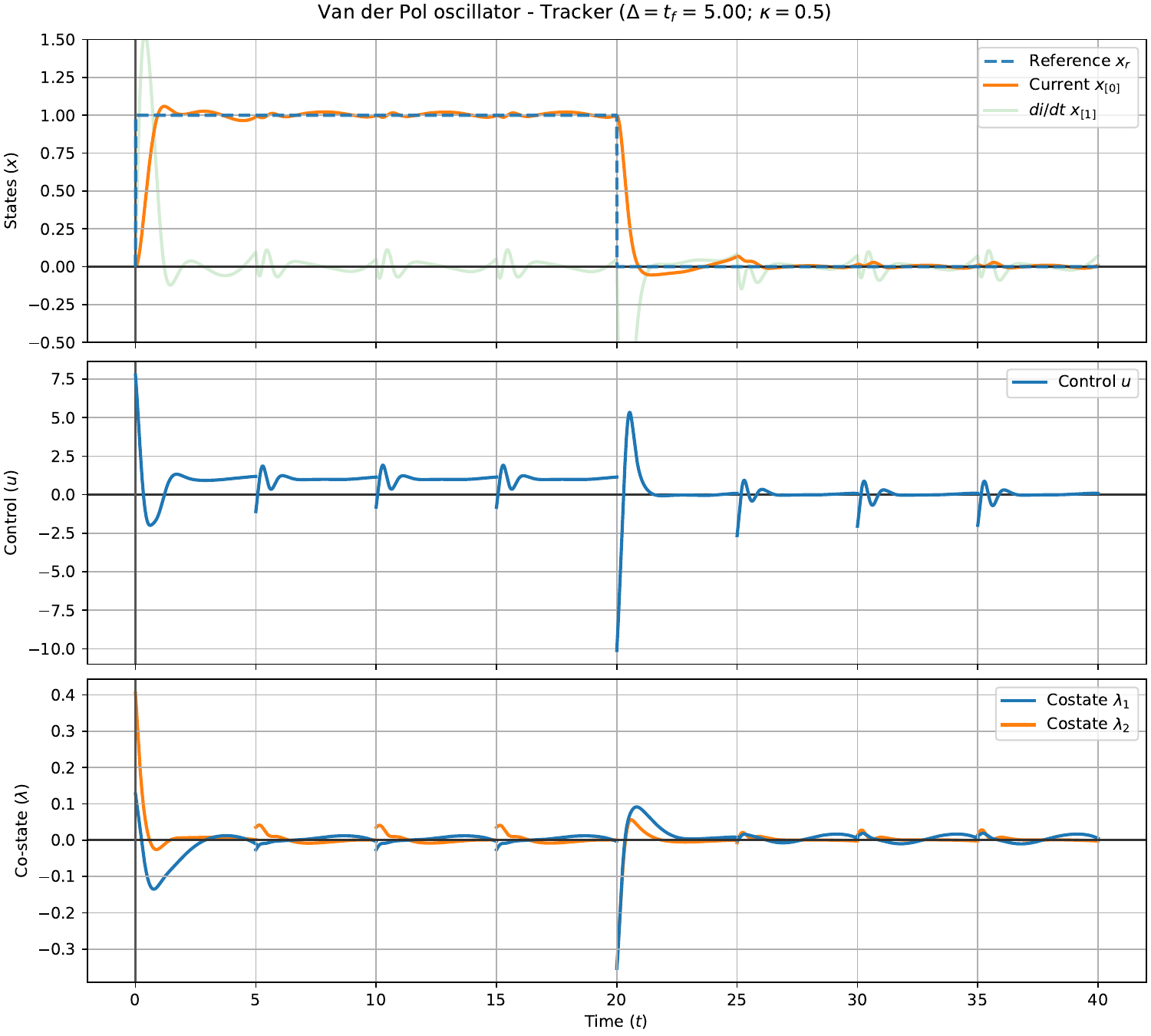}
  \caption{Van der Pol oscillator controlled to follow the reference as a priority.}
  \label{gr:hion-result-track}
\end{figure}

\subsubsection{Receding Horizon Control}
\index{Receding horizon control}

Hion controllers can also be used to drive a system towards a reference state using a model-predictive control (MPC) or receding horizon control scheme, where the terminal time $t_f$ defines the size of the prediction and control horizon. When the sampling period is shorter than the trained terminal time $t_f$, the control is based on the concept of steering the system towards a goal at the end of a receding horizon. Note that under sampling may impact the optimality of the controller. Additionally, while we showcase states sampled at regular intervals, a mix of sampling periods can be implemented depending on the availability of the environment state.

Figure \ref{gr:hion-sampling-linear} presents the behavior of our T-mano model under the MPC scheme for the second-order linear system. As observed, the model is capable of driving the system towards the reference state at distinct sampling periods. When the sampling period decreases, additional discontinuities in the control and co-states emerge. Interestingly, when the environment is sampled in real-time, the control and co-states exhibit smoothness. It is also observed that the controller takes longer to drive the system to the reference state as the sampling period decreases.

\begin{figure}[ht]
  \centering
  \includegraphics[width=\textwidth]{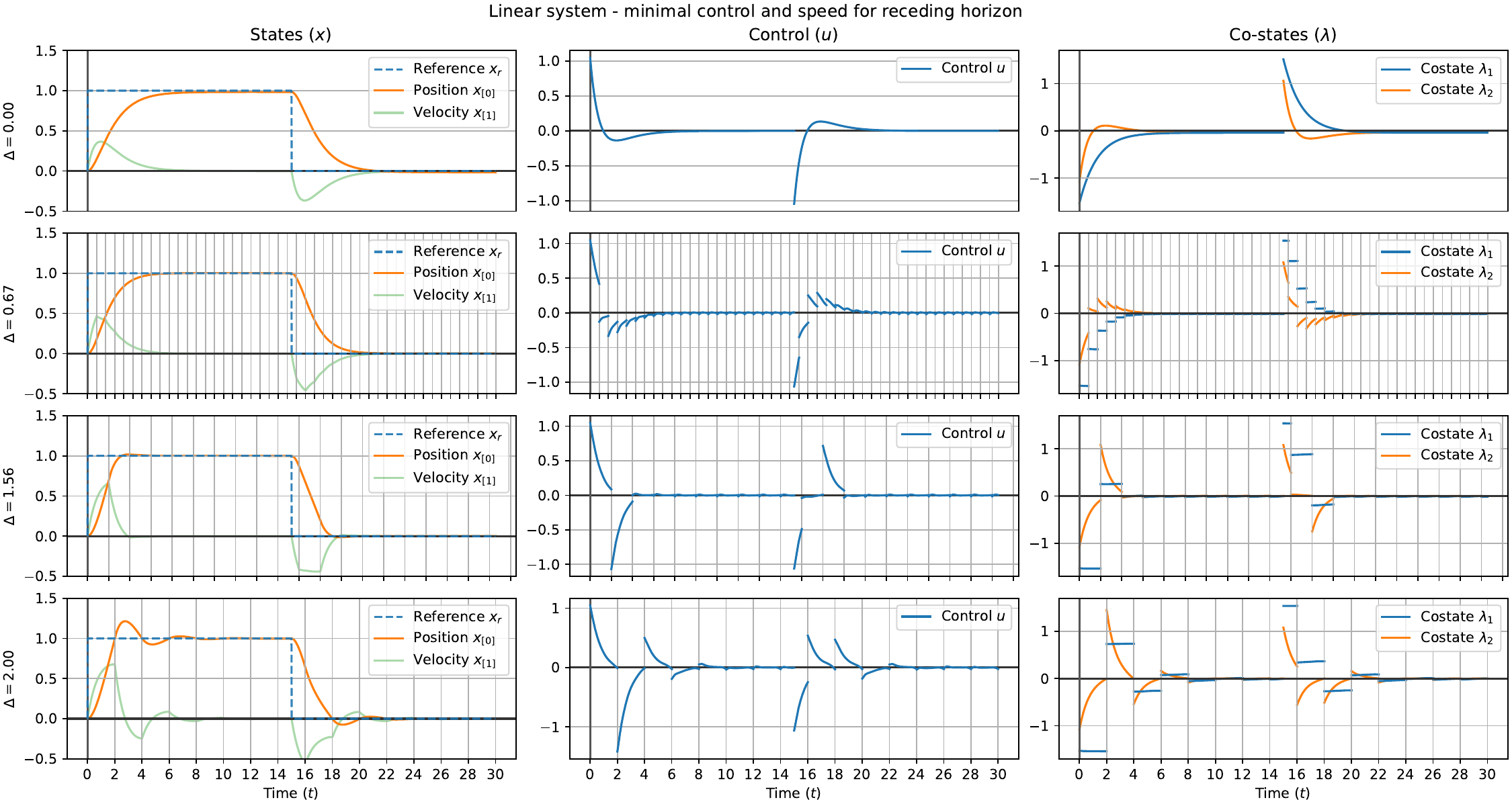}
  \caption{Same linear system controller for different sampling rates.}
  \label{gr:hion-sampling-linear}
\end{figure}

Figure \ref{gr:hion-sampling-van} showcases the results obtained for the Van der Pol tracker model when considering the MPC scheme. As observed, the controller is capable of driving the system towards the reference state of the non-linear system in most instances when under-sampled. However, unlike the linear system, some maneuvering is observed whenever the environment is sampled due to the non-linear dynamics, the desired transient characteristics, and the micro-adjustments needed due to the small accumulated error between the numerical solver and the model state estimation.  Additionally, for a sampling period of $0.0$ (i.e., without any delay in state feedback), it is observed that the model does not control the system towards the reference state. The origin of this behavior is unclear, but it has been linked to the convergence of the losses associated with optimality during training. Finally, the discontinuity and rapid movement that arise in the control when quickly sampled could be challenging to implement in real-world systems and may require smoothness.

\begin{figure}[ht]
  \centering
  \includegraphics[width=\textwidth]{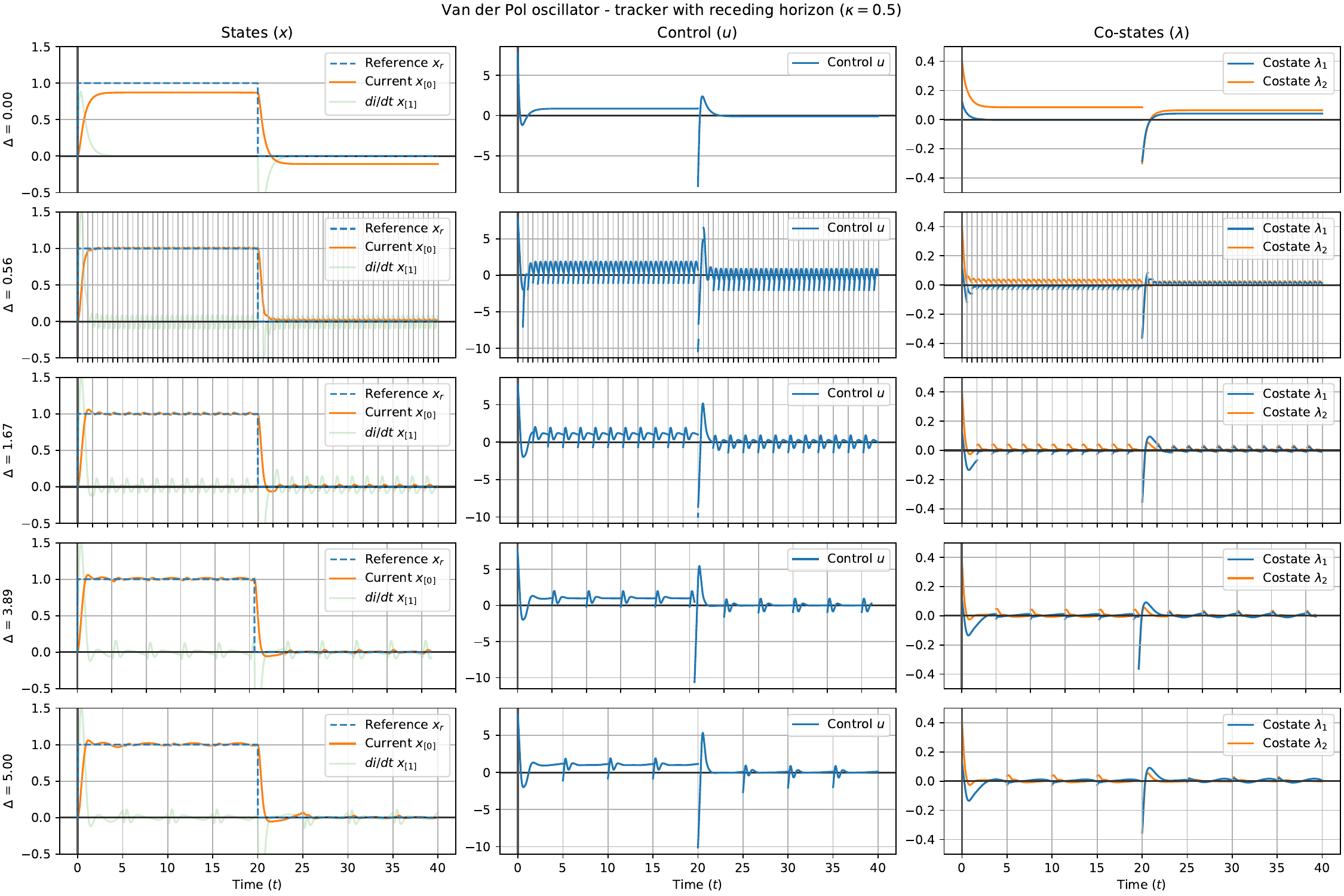}
  \caption{Same Van der Pol oscillator tracker for different sampling rates.}
  \label{gr:hion-sampling-van}
\end{figure}

\subsection{Fine-Tuning}

An attribute of Hion controllers and, in particular, our T-mano models is that once trained, the parameters $\mu$ can be easily be fine-tuned through Algorithm \ref{alg:hion-training}. It can be fine-tuned for new transient behaviors (given a change to the Lagrangian function), constants in the dynamics, terminal time $t_f$, and state distributions.

Figure \ref{gr:hion-lagrangian-van} demonstrate variant T-mano models and their change in behavior when fine-tuned to distinct but similar Lagrangian functions. For these results, the form of the Lagrangian functions tested remained the same but its optimality intensity was modulated by the hyper-parameter $k$ in Eq. \eqref{eq:van-cost-1}. As previously discussed, this Lagrangian function seeks to reduce the velocity of the oscillator between sampling periods. When k is decreased, it can be observed weaker optimality for maintaining reduced speed between sampling, specially once the reference state is reached. It can also be observed that the co-states -- which can heuristically be understood as the rate of change towards optimality w.r.t. time -- is lower the lesser the intensity for reduced speed is. Lower values suggest that the model has less interest in maintaining its optimality. This result shows the T-mano can be fine-tuned for new transilient behavior.

\begin{figure}[ht]
  \centering
  \includegraphics[width=\textwidth]{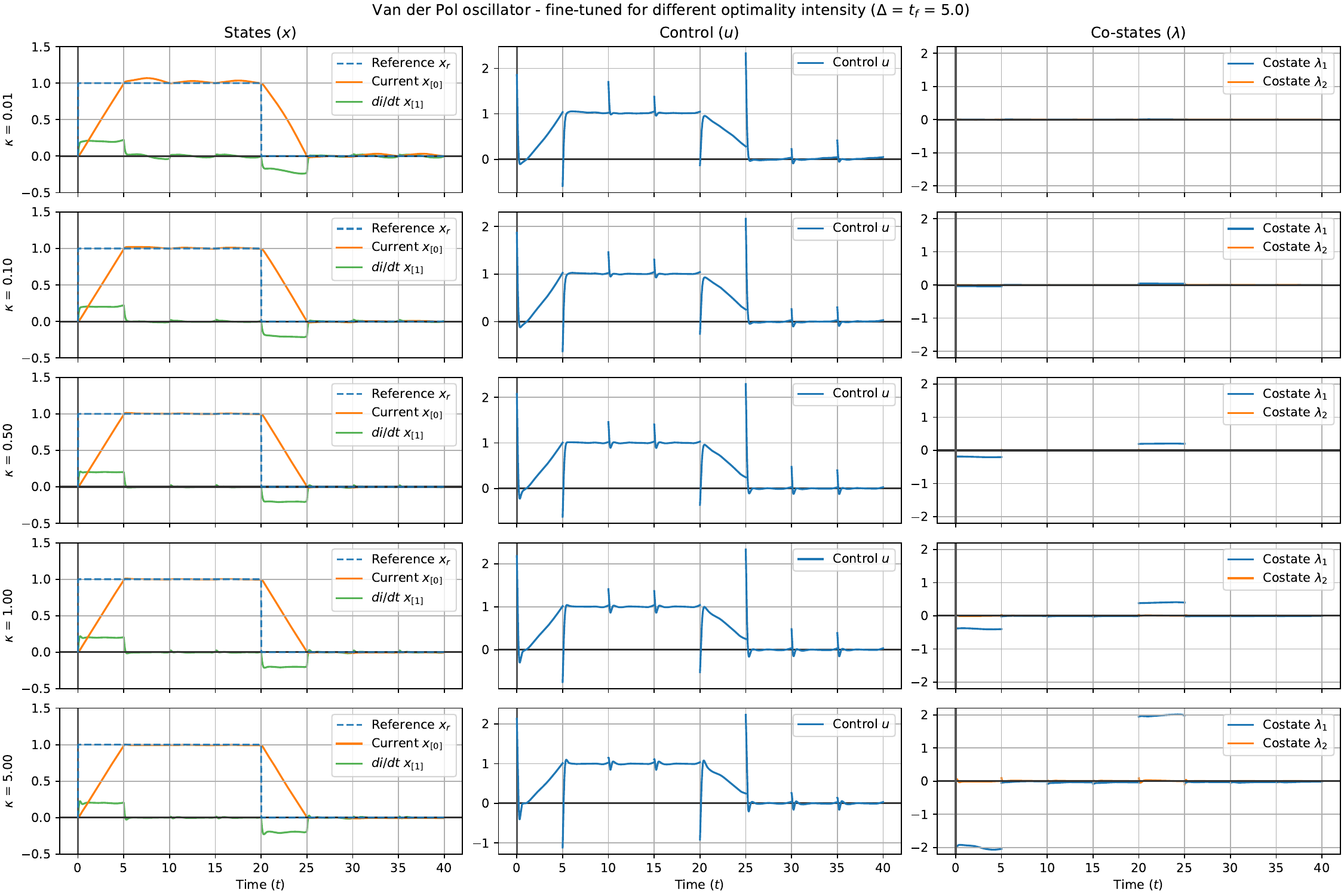}
  \caption{Van der Pol oscillator fine-tuned for different intensity in transient response.}
  \label{gr:hion-lagrangian-van}
\end{figure}

It should also be highlighted that the tracking controllers presented in Figure \ref{gr:hion-result-track} and \ref{gr:hion-sampling-van} were fined-tuned from the model trained to have minimal velocity between sampling in Figure \ref{gr:hion-result-velocity}.

\subsection{Comparison}

To evaluate the efficacy of our model, a comparative analysis was conducted against established model-predictive controllers: Successive Linear Model Predictive Control (SLMPC) \citep{zhakatayev2017successive} and Physics-Informed Neural Control (PINC) \citep{antonelo2024physics}. These controllers were tasked with following a reference signal while maintaining minimal speed within a Van der Pol oscillator. For each controller, the system states and applied controls were recorded for comparison. The results are presented in Figure \ref{gr:mpc-compare}.

\begin{figure}[ht]
  \centering
  \includegraphics[width=\textwidth]{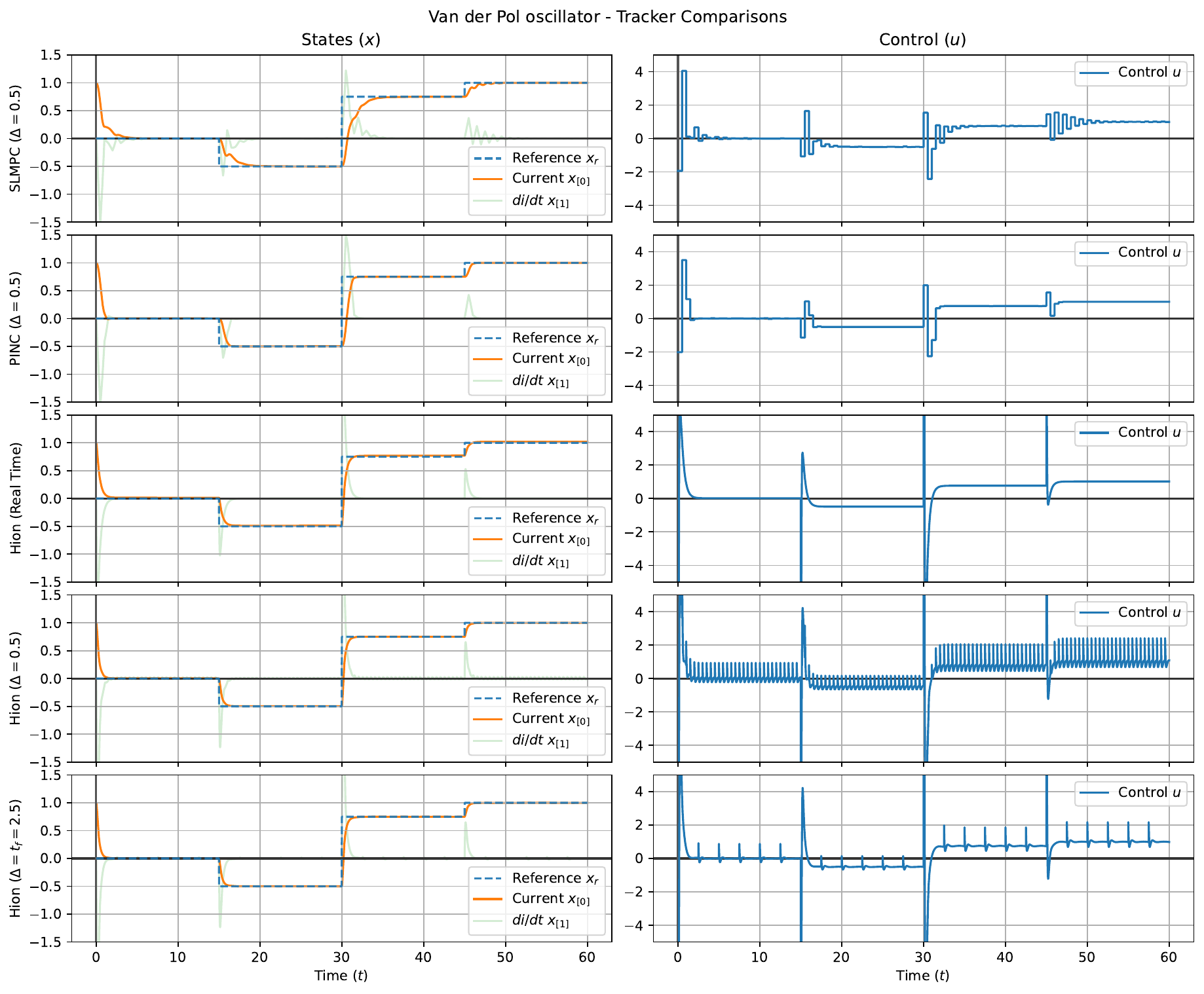}
  \caption{Comparison of model-predictive controllers in tracking a reference signal within a Van der Pol oscillator.}
  \label{gr:mpc-compare}
\end{figure}

As illustrated in Figure \ref{gr:mpc-compare}, SLMPC exhibited the poorest performance in tracking the reference signal. This controller demonstrated a delayed response to changes in the reference and generated significant oscillations during these transitions. These performance issues can be attributed to its reliance on inaccurate future state estimations (derived from system linearization) and zero-order hold control strategies. On the other hand, PINC exhibited enhanced performance, evidenced by the reduction in oscillations, quicker response, and smoother control. This improvement is primarily driven by the neural network-based surrogate model for non-linear state estimation. However, the dependence on zero-order hold strategies restricts PINC from achieving the most optimal control. Our Hion controller showed the best performance, although frequent discontinuities and larger control spikes were observed.

The optimality and tracking capability of these models were investigated by measuring the cumulative cost and reference error of the simulated trajectories. The cost function employed to define the transient goal of each controller is represented by the Lagrangian $L$: \begin{equation} \label{eq:compare-cost} L(t, x(t), x_r(t), u(t)) = (x_{[0]}(t) - x_r(t))^2 + \frac{x_{[1]}(t)^2}{10} \end{equation} which prioritizes tracking the reference signal while minimizing current variation. Table \ref{tb:compare} presents the results obtained for the runs depicted in Figure \ref{gr:mpc-compare}. Notably, all controllers, including SLMPC and PINC, utilized a horizon window of $2.5$ time units.

\begin{table}[htb]
    \caption{Performance metrics for various model-predictive controllers. Hion (Our) demonstrates superior optimality (lower cost $J$) and tracking capability compared to SLMPC \citep{zhakatayev2017successive} and PINC \citep{antonelo2024physics}.}
    \label{tb:compare}
    \centering
    \begin{tabularx}{\textwidth}{@{}|l|c|M|M|M|@{}}
        \hline
        \textbf{MPC}  & \textbf{Sampling period $\Delta$} & \textbf{Cost $J$} & \textbf{$ \int \| x_{[0]}(t) - x_r(t)\|^2\, dt$} \\
        \hline
        \hline
        SLMPC & 0.5 & 2.0346 & 1.8530 \\
        \hline
        PINC & 0.5 & 1.7086 & 1.4472  \\
        \hline
        Hion (Our) & real time  & 1.0507 & 0.6914 \\
        \hline
        Hion (Our) & 0.5 & 1.0331 & 0.5832 \\
        \hline
        Hion (Our) & 2.5 ($t_f$) & \textbf{1.0296} & \textbf{0.5807} \\
        \hline
    \end{tabularx}
\end{table}

As corroborated by Figure \ref{gr:mpc-compare}, the results presented in Table \ref{tb:compare} indicate that our Hion controller, across varying sampling periods $\Delta$, outperforms SLMPC and PINC in terms of greater optimality (lower cost) and enhanced tracking performance. Our Hion controller, with a sampling period of $t_f$, exhibited the best performance, followed closely by the other approaches. As previously discussed, sampling the environment in real-time decreases a Hion controller's optimality and convergence towards a reference signal. Hence, it is understood that performance minimally degrades. Conversely, frequent sampling over short intervals results in control discontinuities and large spikes, causing recurrent oscillations in a non-linear Van der Pol oscillator. Optimal performance was observed when the system was sampled at a rate that allowed completion of a control strategy within the expected control window. This approach led to fewer discontinuities and resulted in smoother responses. The main drawback of infrequent sampling is the reliance on internal state estimation to control the system for potentially significant durations.

\section{Conclusion}

This work formalizes Hion controllers, a novel class of neural network models designed to achieve two key functionalities: estimating future states of dynamical systems and calculating the optimal control needed to reach desired states. Alongside it, we present the T-mano architecture, which ensures accurate initial conditions and system dynamics in the state estimation process, independent of the model parameters values. These models are demonstrated to have greater capability at tracking reference signal while maintaining optimal performance when compared against established models. The source code for this project is publicly available at \href{https://github.com/wzjoriv/Hion}{https://github.com/wzjoriv/Hion}.

Hion controllers (as modeled by our T-mano architecture) demonstrate significant potential for real-world applications. Firstly, they can be trained to manipulate the transient characteristics of the controlled system. For instance, industrial robots in a factory setting can be programmed with our controllers to operate at lower accelerations or speeds, enhancing safety and mitigating potential collisions. Secondly, Hion controllers estimate the future states of a system. This predictive capability empowers proactive collision avoidance and obstacle clearance strategies. It may also facilitate the coordination of multi-agent systems by anticipating the future trajectories of individual agents and their potential interaction points. Additionally, it fosters resilience against state feedback delays and disturbances by leveraging state estimation to remove noise or relying on predicted states until a new measurement becomes available. Lastly, Hion controllers can estimate higher-order time derivatives of states and controls, enabling analysis of properties such as jerk, snap, crackle, and pop, and allowing for control over their transient behavior.

However, limitations exist.  Currently, control actions may exhibit discontinuities upon observing new states due to the creation of a new control strategy without prior information. This may be problematic for real-world systems that cannot tolerate sudden changes in control inputs. A potential solution lies in incorporating information about higher-order derivatives of the previous iteration's state estimation. Another limitation pertains to control scenarios with minimal delay between state measurements. In this situation, T-mano models may not drive the system to the reference state contingent to the convergence of the model's parameters as seen in the results section.

Future research directions are promising. We aim to investigate the application of Hion controllers for control of more complex non-linear chaotic dynamical systems. The ability to predict future states also provides a foundation for collision avoidance strategies in multi-agent systems or complex environments. Additionally, T-mano models can be made small with few parameters. The exploration of their implementation in resource-constrained embedded systems is another compelling avenue. Given that higher-order state and control estimations are obtained via differentiability of the model, no additional parameters are needed to model them. Other future works will explore expanding the T-mano architecture for enhanced state estimation under noisy observed state sampling.

\vskip 0.2in
\bibliography{root}

\end{document}